\documentclass[12pt]{article}
\usepackage{hyperref}          
\usepackage{graphics}
\usepackage{epigraph}
\usepackage{epstopdf}
\usepackage{amsthm}
\usepackage{amsmath}
\usepackage{amssymb}
\usepackage{amstext}
\usepackage{amsfonts}
\usepackage{enumerate} 
\usepackage{mathrsfs}
 \usepackage{float}
\usepackage{fixmath}
\usepackage[normalem]{ulem}

\usepackage[dvipsnames]{xcolor} 
\hypersetup{
    colorlinks, 
    linkcolor={red!50!black}, 
    citecolor={blue!50!black},
    urlcolor={blue!80!black}  
}
\usepackage[longnamesfirst]{natbib}

\usepackage[displaymath, mathlines]{lineno}  
\usepackage{microtype}
\usepackage{mathabx}  
\usepackage{footnote}
\makesavenoteenv{tabular}
\makesavenoteenv{table}
\usepackage[onehalfspacing]{setspace}
\usepackage[shortlabels, inline]{enumitem}
\usepackage{multirow}
\usepackage[title]{appendix}

\usepackage{pstricks} 
\usepackage{pst-plot}
\usepackage{pst-plot}
\usepackage{pst-func}
\usepackage{pst-node}
\usepackage{pst-math}
\usepackage{setspace} 
\usepackage{pst-plot}
\usepackage{cancel}  
\usepackage{sectsty}
\usepackage{caption, subcaption}
\usepackage[colorinlistoftodos,textsize=tiny]{todonotes}
\usepackage{lipsum}

\begin{document}

\baselineskip=.22in\parindent=30pt

\newtheorem{tm}{Theorem}
\newtheorem{dfn}{Definition}
\newtheorem{lma}{Lemma}
\newtheorem{assu}{Assumption}
\newtheorem{prop}{Proposition}
\newtheorem{cro}{Corollary}
\newtheorem*{theorem*}{Theorem}
\newtheorem{example}{Example}
\newtheorem{observation}{Observation}
\newcommand{\exm}{\begin{example}}
\newcommand{\exmm}{\end{example}}
\newcommand{\obs}{\begin{observation}}
\newcommand{\obss}{\end{observation}}
\newcommand{\cor}{\begin{cro}}
\newcommand{\corr}{\end{cro}}
\newcommand{\exx}{\end{exa}}
\newtheorem{remak}{Remark}
\newcommand{\rmk}{\begin{remak}}
\newcommand{\rmkk}{\end{remak}}
\newcommand{\thm}{\begin{tm}}
\newcommand{\nt}{\noindent}
\newcommand{\thmm}{\end{tm}}
\newcommand{\lm}{\begin{lma}}
\newcommand{\lmm}{\end{lma}}
\newcommand{\ass}{\begin{assu}}
\newcommand{\asss}{\end{assu}}
\newcommand{\df}{\begin{dfn}  }
\newcommand{\dff}{\end{dfn}}
\newcommand{\prp}{\begin{prop}}
\newcommand{\prpp}{\end{prop}}
\newcommand{\bqu}{\sloppy \small \begin{quote}}
\newcommand{\equ}{\end{quote} \sloppy \large}
\newcommand\cites[1]{\citeauthor{#1}'s\ (\citeyear{#1})}

\newcommand{\eq}{\begin{equation}}
\newcommand{\eqq}{\end{equation}}
\newtheorem{claim}{\it Claim}
\newcommand{\cl}{\begin{claim}}
\newcommand{\cll}{\end{claim}}
\newcommand{\eit}{\end{itemize}}
\newcommand{\ben}{\begin{enumerate}}
\newcommand{\een}{\end{enumerate}}
\newcommand{\bcen}{\begin{center}}
\newcommand{\ecen}{\end{center}}
\newcommand{\fn}{\footnote}
\newcommand{\ds}{\begin{description}}
\newcommand{\dss}{\end{description}}
\newcommand{\prf}{\begin{proof}}
\newcommand{\prff}{\end{proof}}
\newcommand{\cs}{\begin{cases}}
\newcommand{\css}{\end{cases}}
\newcommand{\lb}{\label}
\newcommand{\ra}{\rightarrow}
\newcommand{\tra}{\twoheadrightarrow}
\newcommand*{\supp}{\operatornamewithlimits{sup}\limits}
\newcommand*{\inff}{\operatornamewithlimits{inf}\limits}
\newcommand{\nf}{\normalfont}
\renewcommand{\Re}{\mathbb{R}}
\newcommand*{\mmax}{\operatornamewithlimits{max}\limits}
\newcommand*{\mmin}{\operatornamewithlimits{min}\limits}
\newcommand*{\argmax}{\operatornamewithlimits{arg max}\limits}
\newcommand*{\argmin}{\operatornamewithlimits{arg min}\limits}
\newcommand{\uhr}{\!\! \upharpoonright  \!\! }

\newcommand{\CR}{\mathcal R}
\newcommand{\CC}{\mathcal C}
\newcommand{\CT}{\mathcal T}
\newcommand{\CS}{\mathcal S}
\newcommand{\CM}{\mathcal M}
\newcommand{\CL}{\mathcal L}
\newcommand{\CP}{\mathcal P}
\newcommand{\CN}{\mathcal N}

\newtheorem{innercustomthm}{Theorem}
\newenvironment{customthm}[1]
  {\renewcommand\theinnercustomthm{#1}\innercustomthm}
  {\endinnercustomthm}
\newtheorem{einnercustomthm}{Extended Theorem}
\newenvironment{ecustomthm}[1]
  {\renewcommand\theeinnercustomthm{#1}\einnercustomthm}
  {\endeinnercustomthm}
  
  \newtheorem{innercustomcor}{Corollary}
\newenvironment{customcor}[1]
  {\renewcommand\theinnercustomcor{#1}\innercustomcor}
  {\endinnercustomcor}
\newtheorem{einnercustomcor}{Extended Theorem}
\newenvironment{ecustomcor}[1]
  {\renewcommand\theeinnercustomcor{#1}\einnercustomcor}
  {\endeinnercustomcor}
    \newtheorem{innercustomlm}{Lemma}
\newenvironment{customlm}[1]
  {\renewcommand\theinnercustomlm{#1}\innercustomlm}
  {\endinnercustomlm}

\renewcommand{\qed}{\tag*{$\blacksquare$}}

\subsubsectionfont{\normalfont\itshape}

\makeatletter
\newcommand{\customlabel}[2]{%
\protected@write \@auxout {}{\string \newlabel {#1}{{#2}{}}}}
\makeatother

\newtheoremstyle{mydefinitionstyle} 
    {\topsep}                    
    {\topsep}                    
    {}                   
    {}                           
    {\bf}                   
    {.}                          
    {.5em}                       
    {}  
\theoremstyle{mydefinitionstyle}
\newtheorem{defn}{Definition} 
\newtheorem*{defn*}{Definition} 
\newtheorem{exmp}{Example}
\newtheorem*{exmp*}{Example}
\newtheorem{hint}{Hint}
\newtheorem{rem}{Remark}

\def\qed{\hfill\vrule height4pt width4pt
depth0pt}
\def\reff #1\par{\noindent\hangindent =\parindent
\hangafter =1 #1\par}
\def\title #1{\begin{center}
{\Large {\bf #1}}
\end{center}}
\def\author #1{\begin{center} {\large #1}
\end{center}}
\def\date #1{\centerline {\large #1}}
\def\place #1{\begin{center}{\large #1}
\end{center}}

\def\date #1{\centerline {\large #1}}
\def\place #1{\begin{center}{\large #1}\end{center}}
\def\intr #1{\stackrel {\circ}{#1}}
\def\R{{\rm I\kern-1.7pt R}}
 \def\N{{\rm I}\hskip-.13em{\rm N}}
 \newcommand{\cprod}{\Pi_{i=1}^\ell}
\let\Large=\large
\let\large=\normalsize


\begin{titlepage}

\def\thefootnote{\fnsymbol{footnote}}
\vspace*{0.5in}

\title{ The Effects of Social Pressure on Fundamental Choices: \\ \vspace{0.5 em} Indecisiveness and Deferral\fn{
This project was initiated by a fruitful (and quite enjoyable) invitation  to Khan on April $6^{th}$ 2020 by the Italian authors.
A preliminary version of this paper was presented at the XVI annual AMASES (Association for Mathematics Applied to Social and Economic Sciences) 2021 meeting, and formed the subtext of his two consolidatory lectures on the ``Foundation of Rational Choice" given at the University of Catania in May 2023.
All authors are especially grateful to Kevin Reffett for his careful reading and knowledgeable encouragement; they also thank Bob Barbera, Davide Carpentiere, Ralph Chami, Ying Chen, Rosaria Distefano, Paolo Ghirardato, Ani Ghosh, Paola Manzini, Marco Mariotti, Lea Nicita, Joel Sobel, Larry Samuelson, Eddie Schlee, and Eric Schliesser for stimulating conversation and correspondence.
Khan acknowledges the support of 2022 JHU Provost's Discovery Award on \lq\lq Deception and Bad-Faith Communication"  towards this draft.
Angelo Petralia acknowledges the support of ``Ministero del Ministero dell'Istruzione, dell'Universit\`a e della Ricerca (MIUR), PE9 GRINS Spoke 8", project \textit{Growing, Resilient, INclusive, and Sustainable}, CUP E63C22002120006.
Alfio Giarlotta and Francesco Reito
acknowledge the support of \lq\lq PIA.CE.RI" (PIAno di inCEntivi per la RIcerca di Ateneo).
}}


\author{
Alfio   Giarlotta\fn{Department of Economics and Business, University of Catania, Italy. alfio.giarlotta@unict.it}
~ 
M. Ali Khan\fn{Department of Economics, Johns Hopkins University, Baltimore, MD 21218 USA. akhan@jhu.edu}  
~  
Angelo  Enrico Petralia\fn{Department of Economics and Business, University of Catania, Italy. angelo.petralia@unict.it}
~ 
Francesco Reito\fn{Department of Economics and Business, University of Catania, Italy. reito@unict.it} 
}

\vskip 1.00em

\vskip 1.00em

\vskip 1.00em

\baselineskip=.18in

\noindent {\bf Abstract:}
In mainstream neoclassical economics, utility maximization is the only engine
of individual action, and the {\it other} or the {\it social,} if it is modeled for decisions deemed fundamental, it is done as a tacit  externality parameter affecting an agent's  maximized payoff. 
And even when  hitched to a social reference point, a fully decisive and immediate response is invariably assumed. In this paper, we propose a non-standard articulation of the trade-off between
personal utility and social distance, one motivated by experimental evidence from psychology, management science, and economics. Our approach deconstructs non-recurrent consumer choice to two stages:
a non-decisive first stage in which a binary relation, called {\it one-many} ordering, yields an interval,
the {\it consideration set,} to which the deferred choice is confined; a decisive second stage in which
the distance from the average social  choice, and future social expectations,  are taken into
account in present utility. Finally, we embed this indecisive consumer in an exploratory  game-theoretic
setting, and show that indecisiveness and choice deferral may cause social loss.    {\hfill(169 words)}

 \vskip 0.50in

\noindent {\bf
JEL Classification}: C70, D01, D03.

\medskip



\noindent {\bf
 Keywords}:  Indecisiveness, fundamental choices,  one-many ordering, two-stage decision, consideration set, individual benefit,  social loss.

\medskip


\end{titlepage}

\setcounter{page}{1}

\setcounter{footnote}{0}


\setlength{\abovedisplayskip}{-10pt}
\setlength{\belowdisplayskip}{-10pt}


\newif\ifall
\alltrue 

\smallskip


\tableofcontents

\medskip
\begin{center}
------------------------------------------------------------------------------------------------------------------------------
\end{center}
\pagebreak
\smallskip 
\bqu {\it Only a being-in-common can make possible a being-separated -- of a bond that unbinds by bonding}.
\hfill{Jean-Luc Nancy (1991)}\fn{See the Preface of Jean-Luc Nancy, {\it The Inoperative Community}. Minneapolis: University of Minnesota Press.} \equ

\section{Introduction} 
In this paper we offer  a model of choice that goes beyond the one in mainstream neoclassical economic theory to take both individual  desires  and social pressures into account.
Our approach appeals to classical behavioral criteria that involve both the  maximization of  personal utility and the  minimization of some measure of divergence between individual and social choice.
This is to say, that the model takes a social preference explicitly into account in a formulation of a two-stage decision. 
In the first stage, we consider an  agent, who is not able to determine a trade-off between these two competing criteria, and is thereby modeled as indecisive.  We formulate this indecisiveness through  a binary relation, which we call \textit{one-many 
ordering}: \lq`one'' because individual preferences are relevant; `\lq many'' because the pressure exerted by the surrounding environment  plays a role; 
\lq`ordering'' because the output is a reflexive transitive relation.
Thus, in this stage, a non-decisive agent discards all the alternatives that are dominated according to individual utility and social cost of 
once-in-a-lifetime choices, resulting in what we can now see as his \textit{consideration set}.
In the second stage the agent makes his {fundamental} choice from this consideration set previously determined, taking into account personal utility, present social distance, and expected future social distance.\fn{Once this draft of the paper was completed, the authors learnt of the work of \cite{JindaniYoung2025} that also develops  the trade-off between individual choices impacted by social norms, but rather than a focus on the choice, their interesting emphasis is on the evolution of the norm itself, to develop a framework to explain, for example, duelling in Europe, female genital mutilation in North Africa, and footbinding in China.
Our goals are a lot more modest.
We hope to engage this original contribution in future work. }

With this brief, almost telegraphic, outline of what we attempt in this paper,  we devote the rest of this introduction to framing our results in the context of the antecedent literature spanning a variety of disciplinary registers.
To begin with a paper chosen out of a rich stream of literature developed in the last two decades,
   we consider its seminal investigation of choice functions as a {\it rational shortlist method (RSM)}, one that elaborates  on the relevance of the psychological and marketing literatures.\fn{Our choice of a 2007 paper by Manzini and Mariotti may appear questionable and arbitrary if viewed either from a chronological perspective, or as our attempt to give salience and prominence to it over the rest of a multifaceted literature: our choice is simply based on the fact that it is most suited to connect our model to the antecedent literature that transcends the economic one.}   \cite{ManziniMariotti2007}  
write: 

\bqu  In line with some prominent psychology and marketing studies, in our model we assume that the decision maker uses sequentially two rationales to discriminate among the available alternatives. These rationales are applied in a fixed order, independently of the choice set, to remove inferior alternatives. This procedure \lq\lq sequentially rationalizes" a choice function if, for any feasible set, the process identifies the unique alternative specified by the choice function.\fn{The authors continue:\lq \lq In this case, we say that a choice function is a Rational Shortlist Method (RSM). Notable in this respect are the \lq\lq Elimination by Aspects" procedure of \cite{Tversky1972}  and the idea of \lq\lq fast and frugal heuristics" of  \cite{GigerenzerTodd1999}.
Similarly, this type of model is widely used and documented in the management/marketing literature. \cite{YeeDahanHauserOrlin2007} provide recent and compelling evidence of the use by consumers of \lq\lq two-stage consideration and choice" decision-making procedures, and also refer to firms taking account of this fact in product development.''
To re-emphasize, it is not our intention to reinterpret the \lq\lq shortlisting" idea in the context of developing preferences that include private consumption utilities and social distance concerns, but to draw the relevance to the literature of the marketing/management community.} \equ

\noindent It is this in two-stage analysis of a decision-making procedure that the management/marketing literature links up with our work, with the substantive difference that the first-stage in the results reported here is given a societal underpinning rather than an individual's own memory, attention or environmental influences.\fn{Staying with  Manzini and Mariotti, they follow up their 2007 work by further consideration of what they call \lq\lq two-stage threshold representations" and in the context of stochastic rather than deterministic choice, by a comprehensive investigation of so-called \lq\lq consideration sets"; see \cite{ManziniMariottiTyson2013} and \cite{ManziniMariotti2014} respectively.  Also see \cite{ManziniMariotti2012} and  Section 4 in \cite{Yildiz2016} for a comparison of $RSMs$ with the author's own \lq\lq list-rationalizable method." Also see Remark 1 below. \label{fn:mm} }

 This is a rich literature that has furnished a new direction to the theory of deterministic and stochastic choice: the former dating to Hicks, Allen  Samuelson and Georgescu-Roegen, and the latter to Block-Marschak, Luce, Richter and McFadden. It is fair to say that the pioneers would not recognize the shape that the theory has currently taken, and the empirical and experimental evidence that has accumulated and invoked to support it.\fn{See also \cite{RobertsLattin1991}, \cite{DesaiHoyer2000}, \cite{ChakravartiJaniszewski2003},	 \cite{GilbrideAllenby2004}, \cite{GigerenzerGaissmaier2011}, \cite{GigerenzerSturm2012}, \cite{SmithKrajbic2018}, and their references.
  Consideration sets came into economics relatively late: see \cite{ChiangChibNarasimhan1999}, \cite{EliazSpiegler2011},  \cite{MasatliogluNakajimaOzbay2012},\cite{CherepanovFeddersenSandroni2013},  and their references and followers. According to these models, the agent does not consider all available alternatives in his decision, but he selects the preferred item among those that have not been previously discarded.
  Note also that the elimination of dominated alternatives, and the associated formation of consideration sets, has been reported and examined also in consumer psychology, economics, and computer science in multi-attribute decision making \citep{HuberPaynePuto1982,Herne1999,XuandXia2012}, but in this work we use it to describe indecisiveness of agents when they make what for them are fundamental choices.}
  By now, rather than seeking newer  perspectives and interpretations, and yet different directions to develop  different bases for consideration sets and two-stage procedures, the need is more for consolidatory  syntheses of what has been accomplished.
In one such recent exploratory attempt, \cite{CarpentierePetralia2025} use consideration sets to develop an underlying rubric  of a variety of ambient models.\footnote{Consideration sets are referred as \lq\lq attention filters" in \cite{MasatliogluNakajimaOzbay2012}, and their terminology is utilized by Carpentiere and Petralia in their paper.
It bears emphasis that  the idea of  consideration sets needs to be distinguished  from \lq\lq random consideration" of objects of choice. 
The introduction of randomization  leads to conceptual complications that are being totally bracketed  here.}

With this brief overview of work in a sister discipline, we move on to the register of economics, beginning with the pioneering work of  \cite{Duesenberry1949} on the influence of ``other-regarding preferences'' in  important and non-recurrent decisions, fundamental choices so to speak, as schools  or career, savings and assets,  location etc., all impacted by the choices of a select peer group that the chooser is influenced by.\fn{See, for example,  \cite{HumlumKleinjansNielsen2012} and \cite{LentBrown2020}.  \label{fn:fund1}}
This concept has now morphed under the notion of \textit{reference dependence}, and has been widely investigated in economic theory in the last two decades.
 A recent review on the topic is provided by \cite{BursztynJensen2016}, who discuss how social distances and social image can be formalized within microeconomic models.
In particular, the following related literature can be singled out for the convenience of the reader.
 \cite{Akerlof1997} analyzes equilibria of different models of social distance in individual decisions, claiming that social benchmarks may cause inefficient allocations. 
\cite{Sobel2005} describes preferences and concerns of agents for status and reciprocity;
\cite{BenabouTirole2006} study the behavior of subjects who have preferences for both reputation and social esteem, showing that pro-social behavior can be ruled out; 
\cite{IjimaKamada2017} propose a network structure of agents, whose payoffs are affected by social distances.
The influence of the \lq\lq other'' in {fundamental choices} is more explicit in applied and experimental economics: 
\cite{CrosonandShang2008} report the effect of \lq\lq downward" social information in decisions to finance public goods;
\cite{GrinblattKeloharjuIkheimo2009} document a \lq\lq neighborhood  effect"  in  consumer purchases of automobiles;
\cite{EesleyandWang2017} experimentally verify the impact of social influence on career choice. 

The standard approach to analyze the effect of the reference dependence on individual choices considers a \textsl{decisive} agent.
Indeed, the agent is required to promptly determine a trade-off between (i) the distance from the reference point, and (ii) personal utility.
However, empirical and experimental evidence shows that people in high-pressure contexts do want to `kill two birds with one stone', trying to obtain the maximum level of personal utility, while being aligned with a social choice level.
In these cases, the agent may not be able to counterbalance individual preferences and social values, but that leads to some inevitable deferment.  An agent becomes indecisive and hesitantly introspective and thereby defers his choice.
This behavioral phenomenon has been already reported by \cite{TverskyShafir1992} who claim that choices under conflicts may yield deferred decisions, contradicting the value maximization principle. 
Later on, indecisiveness and choice procrastination under social pressure and tensions has been tested and measured by \cite{FrostShows1993}, \cite{MilgramTenne2000}, and  \cite{Rassinetal2007}, who analyze psychometrics properties of the so-called \textit{Indeciveness Scale}.\footnote{The \textit{indecisiveness scale} consists of 15 items that are answered on a 5-point scale, and it measures individual differences in indecisiveness caused by many factors, such as regret and social anxiety.}
In a series of experiments \cite{KrijnenZeelenbergBreugelmans2015} and \cite{AlisonShortland2019}  show that important decisions, which may have a considerable impact on the life of subjects and centre on their ability to select between colliding values, are often deferred.
  
{Closer to the basic thrust of this paper, there has been recent work by \cite{PejsachowiczToussaert2017}, \cite{Gerasimou2018}, and \cite{OkTserenjimid2002} that characterizes indecisiveness in decision theory, often by requiring that individual preferences are incomplete, and assuming that agents, in the very first stage of their decision, can only discard suboptimal alternatives.} 
These developments in economic theory have influenced our approach towards capturing an agent's hesitancy when he makes fundamental choices under social pressure.
{Indeed, in our setting indecisiveness arises from the incompleteness of a binary relation which ranks alternatives according to personal utility and current social distance. 
Options characterized by contrasting levels of these two attributes determine a consideration set, from which the agent will pick an item that maximizes his comprehensive utility, which takes into account also current and future social distances.
We prove the existence, {under some assumptions about the agent's primitives}, of an optimal choice performed by an indecisive agent suffering from social pressure.
The consequences of choice deferral in strategic interactions are formalized, by introducing a novel notion of equilibrium, which accounts for indecisiveness of agents.  
Finally, we show that indecisiveness and choice deferral may produce social losses, because the agent discards alternatives that turn out to be desirable when he takes the definitive decision.}

 The organization of the paper is as follows.
Section \ref{SECT:model} introduces preliminary notation and describes the model.
Sections \ref{SECT:results1} and \ref{SECT:results2} show the results: the first focuses on individual's problem, and the second extends our approach to a game theoretic framework. 
Section \ref{SECT:Concluding remarks} underscores the exploratory nature of the work, and suggests directions for further investigation. 
All proofs are relegated to the Appendix.

\section{Preliminary notation and set-up}\label{SECT:model}

There is an individual who lives for three time periods, $t_{1},t_{2}$ 
and $t_{3}$.
He has to make a fundamental choice that involves the first and second period, and produces effects in the third.
{Such} choice is sequential, and will result at the end of the second period in the variable denoted by $x$.
A reference group imposes social pressure on the individual, in the sense that the latter faces the cost $c_1(\delta_{x,x_{s}})$ of the distance, denoted by $\delta$, between $x$ and $x_{s}$, the (finite) average group choice.

In the first period, the individual makes a \lq myopic\rq$\,$ action, based on a {\it personal} utility function $u(x)$, and $c_1(\delta_{x,x_{s}})$.
{Indeed, he} takes into account the current social distance (producing effects at $t_1$ and $t_2$), but not  the future social distance he will face at $t_{3}$. 
Moreover, we assume that at this stage the individual can only restrict the set of considered alternatives, using a mechanism that will be explained below. 
At $t_2$ the choice will be definitive, since the agent selects in his consideration set an alternative, {by} taking into account (trough the maximization of a function $U$) personal utility,  the present social cost, and the cost of the  distance between $x$ and $\bar{x}_f$, which represents the expected social group's choice that will be prevalent in future time $(t_{3})$, after the fundamental choice is made. 

The agent's utility can be written as


\begin{equation*}
U\big(u(x),c_1(\delta_{x,x_{s}}),c_{\,2}(\delta_{x,\bar{x}_f})\big).
\end{equation*}

The arguments of the function are as follows.
\begin{itemize}
\item $u\colon\mathbb{R}_{+}\rightarrow \mathbb{R}$ is a function that attains its finite maximum at $x^{\star}$. 
This map accounts for personal benefits and costs of the choice $x$, with no social concerns.
\item The map $x_f$ is a random variable on $\mathbb{R}_{+}$, with finite support. 
We denote 
by $\rho(x_f)$ the density of $x_f$, if any.
Furthermore, we denote by $\mathscr{F}$ the family of all random variables with finite support on $\mathbb{R}_{+}$.
The function $x_f$ describes the agent's belief about the future social group choice, that will be formed at $t_3$.
We denote by $\bar{x}_f$ the expected value of $x_f$.

\item $\delta\colon \mathbb{R}_{+}^{2}\rightarrow \mathbb{R}_{+}$ is a metric.
We denote by $\delta_{x,x_s}$ the distance between any $x\in\mathbb{R}_{+}$ and $x_s\in \mathbb{R}_{+}$.
Moreover, we denote by $\delta_{x,{\bar{x}_f}}$ the  distance between any $x\in \mathbb{R}_{+}$ and $\bar{x}_f\in\mathbb{R}_+$.
The map $\delta$ measures the distance between any pair of individual, or social, actual, or expected, fundamental choices.  

\item The maps $c_1\colon\mathbb{R}_+\to\mathbb{R}_{+}$ and $c_2\colon\mathbb{R}_{+}\to\mathbb{R}_{+}$ are nondecreasing  functions of the distance $\delta$ of $x$ respectively from either $x_{s}$ or $x_{f}$, that is,  
$\partial c_1/\partial\delta_{x,x_s}\geq 0$, and $\partial c_2/\partial\delta_{x,\bar{x}_f}\geq 0$.
Moreover, we assume that $c_1(0)=c_2(0)=0$.
We denote by $c_1(\delta_{x,x_s})$ the value, according to $c_1$, of the distance between $x$ and $x_s$, and by $c_2(\delta_{x,\bar{x}_f})$ the value, according to $c_2$, of the distance between $x$ and $\bar{x}_f$.
The functions $c_1$ and $c_2$ describe the cost of current and expected future social distance the agent faces, once he makes his choice.
\item $U\colon \mathbb{R}_{+}^3\to\mathbb{R}$ is a function nondecreasing in $u$, and nonincreasing in $c_1$ and $c_2$, that is $\partial U/\partial u(x)\geq 0$, $\partial U/\partial c_1(\delta_{x,x_s})\leq 0$, and $\partial U/ \partial c_2(\delta_{x,\bar{x}_f})\leq 0$.
Note that, when $u,c_1,c_2,$ and $x_f$ are exogenous this map can be written as a composite function $U\colon
\mathbb{R}^2_+\to\mathbb{R}$ of the vector $(x,x_s)$. 
It summarizes individual preferences at $t_1$.

\end{itemize}

In the first period, the individual defers his decision, and reduces the set of considered alternatives to those that are Pareto-optimal with respect to (a) personal utility, and (b) the cost of the distance of his choice from the current social one.
This approach describes agent's indecisiveness and allows no compensation between alternative criteria.
In the second period, the agent overcomes his indecisiveness, and makes his fundamental choice, considering also personal beliefs about the future choice of the reference group.
{His} decision process is formally described below.

\begin{enumerate}
\item At $t_{1}$, the agent observes $x_{s}$, and restricts his choice to the subset of alternatives that are maximal with respect to the binary relation $\succcurlyeq$ on $\mathbb{R}_{+}$ defined by

\begin{equation*}
x_{i}\succcurlyeq x_{j}\quad \; \Longleftrightarrow\;\qquad u(x_{i})\geqslant u(x_{j})\; \; \text{and}\; \;c_1(\delta_{x_{i},x_{s}})\leqslant c_1(\delta_{x_{j},x_{s}})
\end{equation*}

for all $x_{i},x_{j}\in \mathbb{R}_{+}$.
We call $\succcurlyeq$ the \textit{one-many ordering}, since it Pareto-ranks all alternatives according to the conflict between personal utility $u(x)$ (`one') and cost of $\delta_{x,x_{s}}$(`many').
As usual, $\succ$ denotes the strict part of $\succcurlyeq$, i.e., $x\succ x^{\prime}$ if and only if $x\succcurlyeq x^{\prime}$ and $x^{\prime}\not\succcurlyeq x$.
Let $\max(\mathbb{R}_{+},\succcurlyeq)=\{x\in\mathbb{R}_{+}\,\vert\, y\succ x\;\text{for no $y\in \mathbb{R_{+}}$}\}$ be the set of undominated alternatives in $\mathbb{R}_{+}$ with respect to $\succcurlyeq$.
In our setting $\max(\mathbb{R}_{+},\succcurlyeq)$ is the agent's \textit{consideration set}, which is shaped by the conflict between personal utility and social distance.

 
\item At $t_{2}$, the agent chooses $x$ from the consideration set determined at $t_{1}$, by taking into account personal utility $u(x)$, the cost of the distance $\delta_{x,x_s}$ of her choice from the current social choice, and the cost of the distance $\delta_{x,\bar{x}_f}$ from the expected future social choice.
The maximization problem becomes

\begin{equation*}
\max_{x\, \in\, \max\left(\mathbb{R}_{+},\succcurlyeq\right)} U\big(u(x),c_1(\delta_{x,x_s}),c_{\,2}(\delta_{x,\bar{x}_f})\big).\label{EQ:Optimal choice utility}
\end{equation*}

\end{enumerate}

Figure \ref{fig:diagram 1} pictures  the steps of the decision process.
 

\begin{figure}[pth]
\begin{center}
\psset{yunit=3.2cm,xunit=6.5cm,algebraic}
\begin{pspicture}[showgrid=false](0,-0.5)(2.2,0.5)
\psaxes[Dy=0,Dx=2,ticks=none,labels=none,origin=none,yAxis=false,arrowsize=0.2]{->}(0,0)(0.14,0)(2.44,0)
\psxTick(0.14){\text{\begin{minipage}{0.35\linewidth}
\footnotesize
\begin{center}
Individual observes $x_{s}$, and discards alternatives dominated according to $u(\cdot)$ and $c_1(\delta_{\cdot,x_{s}})$
\end{center} 
\end{minipage}}}
\psxTick(1.25){\text{\begin{minipage}{0.40\linewidth}
\footnotesize
\begin{center}
Individual, relying on $x_f$, maximizes over his \textit{consideration set} the function \vspace{-0.25cm}$$U\left(u(\cdot),c_1(\delta_{\cdot,x_s}),c_{\,2}\left(\delta_{\cdot,\bar{x}_f}\right)\right)$$  
\end{center}
\end{minipage}}}
\psxTick(2.20){\text{\begin{minipage}{0.40\linewidth}
\footnotesize
\begin{center}
$x_f$ is realized
\end{center}
\end{minipage}}}
\rput(0.15,0.15){$t_{1}$}
\rput(1.25,0.15){$t_{2}$}
\psxTick(1.25){}
\rput(2.20,0.15){$t_3$}
\end{pspicture}
\end{center}
\caption{The agent's decision process}
\label{fig:diagram 1}
\end{figure}
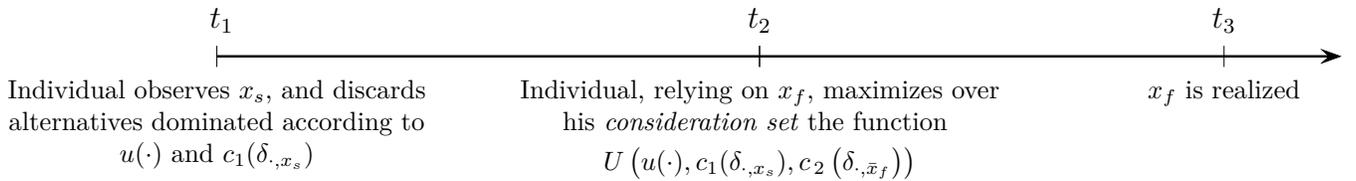

\vspace{1cm}
Note that the time frame between $t_1$ and $t_2$ is typically lower than the one starting at $t_2$ and finishing at $t_3$, since the consequences of a fundamental choice can be revealed after years (think again to education, or voting).\footnote{It can be observed that the timespan $t_{2}-t_1$ is a mere expositional device rather than a passage of time.
In fact, the time needed to decide between available alternatives may be considerable: think to purchasing a new house (after having discarded many apartments proposed by a real estate agency) or accepting a job offer (after having considered many potential placements).}   Thus, the social choice prevalent at $t_3$ may be, in principle, rather different from $x_s$.
This explains why $x_f$ is a random variable.
Note that the agent assumes his fundamental choice, performed at $t_2$, will be kept and produce effects at $t_3$.
This is distinctive of fundamental choices, which are usually thought as a \textsl{once-in-a-lifetime decisions} that are more or less irreversible over an individual's life cycle.\fn{We should like to see this  emphasis on choices of fundamentals, rather than \lq\lq choice" in general, as one underlying theme of this work, beginning with the title itself; also see Footnote \ref{fn:fund1} and  Footnote \ref{fn:fund2} below.}




%


\section{The individual's problem}\label{SECT:results1}  As already specified, we conceive the individual's  fundamental choice in two stages, a indecisive one and a decisive one.  
We begin with the first.

\subsection{Indecisiveness} \label{SUBSECT:non-compensatory approach}
As already mentioned, individuals may suffer from social pressure: this is the burden the agent feels when his choice is too far from that prevailing among the people around him.
After observing the distance of his personal preference from the reference point, the agent becomes indecisive, unable to fully resolve the contradiction between personal aspirations and the distance from the reference group's choice.
The best he can do at first is to defer his decision and reduce the set of considered alternatives, {according to two} distinct criteria:
(a) \textit{personal utility} of the fundamental choice, and (b) its  \textit{social distance}. 
Let us justify our approach by an enlightening example: the process to acquire an academic education. 
A society with a low level of education may regard high education useless, or even consider it with suspicious circumspection.  
Thus, the distance between the agent's desired education and the level of culture of the community may become a measure of social acceptability.
In this case, the conflict between the social level of choice and individual aspiration may not allow the agent to immediately take a decision.
This is true also in the opposite case in which a reference group may push the individual to pursue a high level of education.\fn{See, for instance, \cite{GuerraBraungartRieker1999} and \cite{SakaGatiKelly2008} who report difficulties of students college with regard to career choice caused by social pressure; also see  \cite{Zhangetal2022} and the references in Footnote \ref{fn:fund1} above. \label{fn:fund2} }  

The maximization of the one-many ordering captures the indecisiveness that arises from the conflict between personal aspirations and social pressure, and it constitutes the first stage of our approach.
The relationship $x_{i}\succcurlyeq x_{j}$ says that $x_{i}$ is at least as good as $x_{j}$ if both (1) $x_{i}$ gives at least the same personal utility as $x_{j}$, and (2) $x_{i}$ is at least as close to the reference point as $x_{j}$.
Thus, when the agent selects items which are maximal according to $\succcurlyeq$, he excludes all the  suboptimal alternatives that bring a lower personal utility and a higher social distance.
On the other hand, indecisiveness forces him to keep options that display unparalleled combinations of personal utility and social distance.
The relation $\succcurlyeq$ is a (typically incomplete) preorder, that is, reflexive and transitive.\footnote{Recall that $\succcurlyeq$ is \textit{complete} if $x\succcurlyeq y$ or $y \succcurlyeq x$ holds for all distinct $x,y \in X$, \textit{reflexive} if $x\succcurlyeq x$ holds for any $x\in X$, and \textit{transitive} if $x\succcurlyeq y$ and $y \succcurlyeq z$ implies  $x\succcurlyeq z$ for all $x,y,z\in X$.}
The next result shows that under some standard assumptions the set $\max(\mathbb{R}_{+},\succcurlyeq)$ is a closed interval.
\begin{prop}\label{PROP:One-many interval}
Assume $u$ is strictly quasiconcave, $c_1$ is increasing in $\delta$, and let $\delta$ be the euclidean distance on $\mathbb{R}_{+}$.
When $x_{s}=x^{\star}$, $\max(\mathbb{R}_{+},\succcurlyeq)=x_{s}=x^{\star}$.
Otherwise, $\max(\mathbb{R}_{+},\succcurlyeq)=
[\min\{x_{s},x^{\star}\},\max\{x_{s},x^{\star}\}]$.\footnote{A function $u\colon \mathbb{R}_{+}\to \mathbb{R}$ is \textsl{strictly quasiconcave} if for any $x,y\in \mathbb{R}_{+}$ and any $\lambda\in(0,1)$ we have that $u\left(\lambda x+(1-\lambda)y\right)>\min(u(x),u(y))$.
When the above inequality weakly holds for any $\lambda\in[0,1]$, we say  that $u$ is \textsl{quasiconcave}.}
\end{prop}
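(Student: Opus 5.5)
Since $\delta$ is the euclidean distance and $c_1$ is (strictly) increasing, $c_1(\delta_{x,x_s})\leq c_1(\delta_{y,x_s})$ holds if and only if $|x-x_s|\leq|y-x_s|$. So the one-many ordering reduces to the pair of criteria $u$ (higher is better) and $|x-x_s|$ (lower is better), and the statement becomes a Pareto-frontier computation on $\mathbb{R}_+$. The other key ingredient is a consequence of strict quasiconcavity: $u$ is strictly increasing on $[0,x^\star]$ and strictly decreasing on $[x^\star,\infty)$. To see this, take $y<z\leq x^\star$ and write $z$ as a convex combination of $y$ and $x^\star$; strict quasiconcavity gives $u(z)>\min(u(y),u(x^\star))=u(y)$. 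The argument on the right of $x^\star$ is symmetric. In particular $x^\star$ is the unique maximizer of $u$.

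If $x_s=x^\star$, the point $x^\star$ strictly beats every $y\neq x^\star$ in both criteria, since it has strictly higher $u$ and zero distance. So every other point is dominated. Conversely, $x^\star$ itself is undominated, because no $y$ has $u(y)\geq u(x^\star)$ except $y=x^\star$. Hence the maximal set is $\{x^\star\}$.

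Now assume $x_s\neq x^\star$, and by symmetry treat $x_s<x^\star$; the other case is the mirror image. I will show two things.

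\emph{Points outside $[x_s,x^\star]$ are dominated.}
\begin{itemize}
\item If $x<x_s$, then $x_s$ dominates $x$: monotonicity on $[0,x^\star]$ gives $u(x_s)>u(x)$, and the distance of $x_s$ to itself is $0<|x-x_s|$.
\item If $x>x^\star$, then $x^\star$ dominates $x$: we have $u(x^\star)>u(x)$ and $|x^\star-x_s|<|x-x_s|$.
\end{itemize}

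\emph{Points inside $[x_s,x^\star]$ are undominated.} Fix $x$ in the interval and suppose $y\succ x$, so that $u(y)\geq u(x)$ and $|y-x_s|\leq|x-x_s|$, with at least one strict inequality. The distance condition forces $y\in[2x_s-x,\,x]$. On the part of this range below $x_s$, points are even farther down the increasing branch, so $u(y)<u(x)$ unless $y=x$. On $[x_s,x]$, strict monotonicity gives $u(y)<u(x)$ unless $y=x$. Points $y<x_s$ with $y\geq 2x_s-x$ also satisfy $y<x$ inside the increasing region, so again $u(y)<u(x)$. In every case $y=x$, which contradicts strictness. So the maximal set is exactly $[x_s,x^\star]=[\min\{x_s,x^\star\},\max\{x_s,x^\star\}]$.

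The main obstacle is deriving the strict single-peakedness of $u$ cleanly from strict quasiconcavity. One point needs care there: the domain is $\mathbb{R}_+$, and $x^\star$ may lie at the boundary $0$. In that case $x_s>x^\star$, and the mirrored argument uses only the decreasing branch, so the argument still goes through. Once monotonicity on each side of $x^\star$ is in hand, the rest is the two dominance checks above.
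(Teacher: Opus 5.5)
Your proof is correct and follows essentially the same route as the paper's: single-peakedness of $u$ from strict quasiconcavity, domination of points outside the interval by $x_s$ or $x^\star$, and undominatedness of interior points because any challenger must lose either in $u$ or in distance to $x_s$. You are slightly more careful than the paper, which only asserts monotonicity: you derive it and check that $x^\star$ is itself undominated when $x_s=x^\star$. One small fix: the convex-combination step needs $z<x^\star$, and the endpoint $z=x^\star$ follows from $u(x^\star)\geq u(w)>u(y)$ for any $w\in(y,x^\star)$.
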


Thus, when individual preferences are convex, and the cost of present social distance is represented by an increasing map of the euclidean distance, three cases are possible:\footnote{Convexity of preferences has been tested by \cite{AndreoniCastilloPetrie2003}.
See also \cite{RichterRubinstein2019}, and \cite{BrandlBrandt2020} for recent advances on the topic.}

\begin{description}
\item[(i)] $x_{s}=x^{\ast}$:
the choice that maximizes individual utility is the social choice.
The maximum is unique, and is $x_{s}$.
The agent does not suffer from indecisiveness, since $x^{\star}$ is the unique alternative that maximizes personal utility and minimizes social distance.

\item[(ii)] $x_{s}<x^{\ast}$:
the social level of choice is lower than that maximizing individual utility.
The agent's consideration set is the interval is $[x_{s},x^{\ast}]$.
All distinct choices in the interval are pairwise incomparable, if the agent jointly evaluates their distance from the social choice and his personal utility.
Thus, the agent is not immediately able to take a choice from his consideration set: alternatives close to $x^{\star}$ bring him higher utility but also higher distance pressure, whereas choices in close proximity of $x_s$ exhibit less social distance but also lower personal utility. 

\item[(iii)] $x_{s}>x^{\ast}$:
the social choice is higher than that maximizing personal utility.
The agent's consideration set is $[x^{\ast},x_{s}]$.
This case is symmetric to the previous one. 
Again, the individual is indecisive, and unable to determine a combination of personal utility and social distance that satisfies his needs.
\end{description}

Note that the individual's commitment to $[\min\{x_{s},x^{\star}\},\max\{x_{s},x^{\star}\}]$ holds for two reasons: first, alternatives discarded at $t_{0}$ may not be available at $t_{1}$.
This happens in many real-life situations: once one receives a job offer, he must respect a deadline to sign the contract, otherwise the offer is revoked.
Moreover, considering and keeping all available alternatives at $t_{1}$, whenever possible, may be too costly for the agent.
 For instance, people who want to buy a new home usually have to pay a commission or deposit to guarantee their interest in the transaction.
We now turn to the analysis of the second stage. 


\subsection{Final choice}
\label{SUBSECTION:Choosing within the optimal interval}
Once his consideration set has been determined, the agent faces (in two cases out of three)  a new decision problem. 
In fact, he now takes as given a (not necessarily ex-ante) belief $x_{f}$ about the future social value that will show up at $t_3$, and chooses $x$ such that the tradeoff between personal utility, present social distance and future social distance is maximized.
Note that the agent does not consider his belief at time $t_1$ because of (at least) two reasons: (1) ``human  myopia", which leads him to first underweight the future consequences of his actions at $t_{3}$ and overweight the associated present consequences;\fn{See, for example,  \citealt{BrownLewis1981}, and \citealt{AngeletosHuo2021}.} (2) the possibility that he simply has no belief in the first stage, and shapes his expectations only when he overcomes the indecisiveness by a more accurate subsequent reasoning.\footnote{In this work for simplicity we assume that beliefs are exogenous.}
The next proposition offers sufficient conditions that guarantee the existence of a fundamental choice. 

\begin{prop}\label{PROP:Optimal choice}
\it If  $U$ is upper semicontinuous at $x$, $u$ is strictly quasiconcave, $c_1$ is increasing in $\delta$, and $\delta$ is the euclidean distance on $\mathbb{R}_{+}$, a {fundamental} choice exists.\footnote{A  function $U\colon \mathbb{R}_{+}\to\mathbb{R}$ is \textsl{upper semicontinuous} at $x_0\in\mathbb{R}_+$ if, for each $x\in\mathbb{R}_{+}$, we have that $\lim\sup_{x\to x_0} f(x)\leq f(x_0)$.
$U$ is upper semicontinuous if it is upper semicontinuous at any $x\in\mathbb{R}_{+}$.} 
\end{prop}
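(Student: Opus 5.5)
The plan is to reduce the statement to the Weierstrass extreme value theorem for upper semicontinuous functions on compact sets. First I will invoke Proposition \ref{PROP:One-many interval}: under exactly the stated hypotheses ($u$ strictly quasiconcave, $c_1$ increasing in $\delta$, $\delta$ euclidean), the consideration set $\max(\mathbb{R}_{+},\succcurlyeq)$ is either the singleton $\{x^{\star}\}$ (when $x_s=x^{\star}$) or the closed bounded interval $[\min\{x_{s},x^{\star}\},\max\{x_{s},x^{\star}\}]$. Both are nonempty and compact subsets of $\mathbb{R}_{+}$, because $x_s$ is finite and $u$ attains its finite maximum at $x^{\star}$.

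Next I will view the objective as a function of $x$ alone. Since $u$, $c_1$, $c_2$, $x_s$ and the belief $x_f$ (hence $\bar{x}_f$, which is finite because $x_f$ has finite support) are exogenous, the composite map $x\mapsto U\big(u(x),c_1(\delta_{x,x_s}),c_2(\delta_{x,\bar{x}_f})\big)$ is a real-valued function on $\mathbb{R}_{+}$. I read the hypothesis ``$U$ is upper semicontinuous at $x$'' as upper semicontinuity of this composite in $x$ at every point, in line with the footnote's definition and with the paper's remark that $U$ can be regarded as a function of $x$ once the primitives are fixed. Its restriction to the consideration set is then also upper semicontinuous.

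The conclusion follows from the standard argument. Let $M=\sup$ of the objective over the compact set $K=\max(\mathbb{R}_{+},\succcurlyeq)$, allowing $M=+\infty$ a priori. Take a sequence $x_n\in K$ with values tending to $M$. By Bolzano--Weierstrass, a subsequence converges to some $\hat{x}\in K$, since $K$ is closed. Upper semicontinuity then gives $U(\hat{x})\ge\limsup U(x_{n_k})=M$. Hence $M$ is finite and attained at $\hat{x}$, and $\hat{x}$ is a fundamental choice. In the degenerate case $x_s=x^{\star}$ the argument is unnecessary, because $K$ is a singleton and its unique element is trivially optimal.

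The main obstacle is interpretive rather than technical. I must make sure the upper semicontinuity hypothesis really applies to the composite function of $x$ restricted to the interval, and not merely to $U$ as a function of its three arguments. If it were only the latter, I would need continuity of $u$, $c_1$, $c_2$ together with suitable monotonicity of $U$, and these are not assumed. I will therefore state this reading explicitly at the start of the proof.
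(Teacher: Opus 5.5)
Your proposal is correct and matches the paper's proof: it invokes Proposition~\ref{PROP:One-many interval} to show the consideration set is a compact interval, then applies the Weierstrass theorem for upper semicontinuous functions, reading the hypothesis (as the paper does) as upper semicontinuity in $x$ of the composite objective. The only differences are that you prove the generalized Weierstrass theorem directly by a sequential argument rather than citing it, and you treat the singleton case $x_s=x^\star$ explicitly.
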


Thus, if we keep the assumptions of Proposition~\ref{PROP:One-many interval}, and we suppose that the individual comprehensive utility is upper semicontinuous, there exists a solution to the fundamental choice problem of an indecisive agent.\footnote{Upper semicontinuous utility functions, and continuous utilities, later applied in this work, {represent} 
complete preorders defined over $\mathbb{R}^{n},$ as shown by \cite{Richter1980}.
Thus, we are assuming that in the second stage the agent is rational.}
%
Note that a change in $x_{s}$, the social choice, affects the DM's consideration set $[\min\{x_{s},x^{\star}\},\max\{x_{s},x^{\star}\}]$, and the maximization of $U$ in the definitive choice. 
Similarly, personal utility $u$ and the cost $c_1$ of the current social distance play a role both in the formation of the agent's consideration set and in the final decision. 
Finally, the expected cost $c_2$ of distance from the expected future reference group's choice $\bar{x}_f$ only affects the agent's final choice.
We denote by $\widehat{x}=\argmax_{x\in\mathbb{R}_+} U\big(u(x),c_1(\delta_{x,x_s}),c_{\,2}(\delta_{x,\bar{x}_f})\big)$, if any.
Note that $\hat{x}$ is the solution to a standard choice problem, in which the agent is decisive, and able to immediately maximize $U$, considering present and future social distance.
Figure~\ref{fig:general model} illustrates the case in which $\widehat{x}$ belongs to the DM's consideration set $[\min\{x_{s},x^{\star}\},\max\{x_{s},x^{\star}\}]$: our predictions do not depart from neoclassical consumer theory.

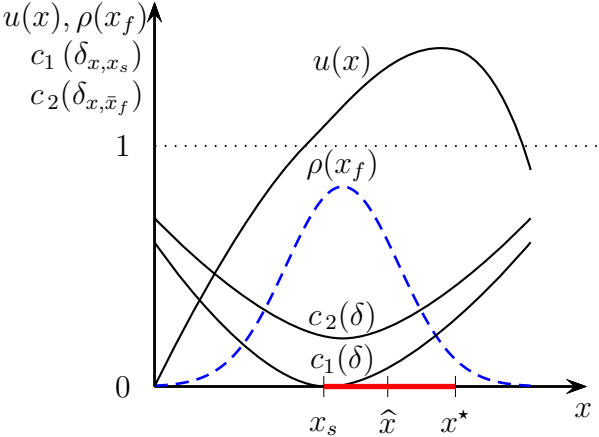
\begin{figure}[H]
\begin{center}
\psset{yunit=3.2cm,xunit=5cm,algebraic}
\begin{pspicture}[showgrid=false](0,-0.06)(1.2,2.1)
\psaxes[Dy=1,Dx=2,ticks=none,arrowsize=0.2]{->}(0,0)(0,0)(1.15,1.6)
\psline[linestyle=dotted](0,1)(1.2,1)
\psGauss[linestyle=dashed,linecolor=blue,mue=0.5,sigma=0.15,linewidth=1pt,yunit=1cm]{0}{1}
%
\pscurve(0,0.7)(0.50,0.2)(1,0.7)
\pscurve(0,0)(0.4,1)(0.8,1.4)(1,0.9)
\pscurve(0,0.6)(0.45,0)(1,0.6)
\rput(-0.21,1.53){$u(x),\rho(x_f)$}
\rput(-0.18,1.37){$c_1\left(\delta_{x,x_{s}}\right)$}
\rput(-0.18,1.20){$c_{\,2}(\delta_{x,\bar{x}_f})$}
\rput(1.14,-0.1){$x$} 
\rput(0.5,0.29){$c_{\,2}(\delta)$}
\rput(0.5,0.11){$c_1(\delta)$}
\psxTick(0.45){x_s}
\psxTick(0.62){\widehat{x}}
\psxTick(0.8){x^{\star}}
\rput(0.5,0.92){$\rho(x_f)$}
\rput(0.5,1.35){$u(x)$}
\psline[linecolor=red,linewidth=2pt](0.45,0)(0.8,0)
\end{pspicture}
\end{center}
\caption{The alternative $\widehat{x}$ falls in the optimality interval (the red segment).}
\label{fig:general model}
\end{figure}

However, the belief $x_f$ may drive $\widehat{x}$ out of $[\min\{x_{s},x^{\star}\},\max\{x_{s},x^{\star}\}]$, forcing the individual to choose an alternative with utility $U$ lower than the one he would have selected without social-pressure constraints, falling in a \textsl{indecisiveness trap}.
This case is depicted in Figure \ref{fig:trap}.
In Section \ref{SECT:results2} we will show how this condition in strategic interactions may generate a social loss.

\vspace{-1cm}

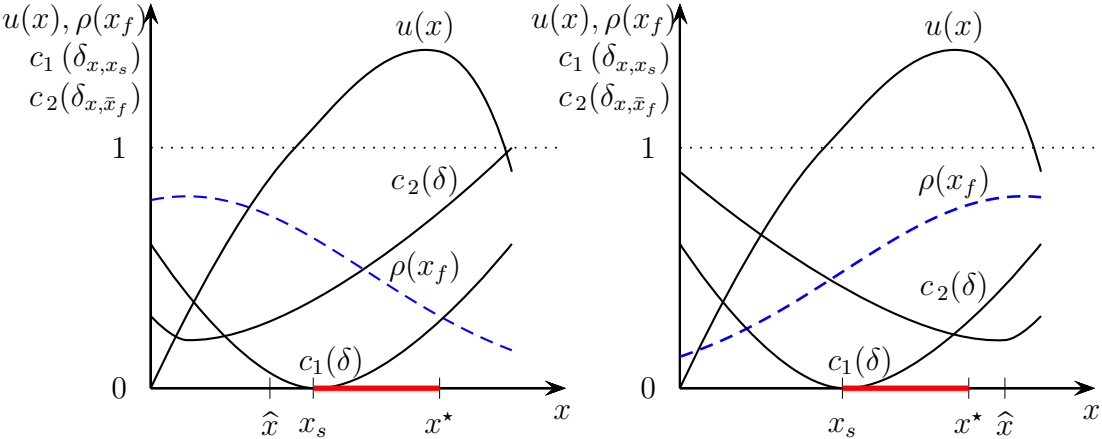
\begin{figure}[H]
\begin{center}
\psset{yunit=3.2cm,xunit=4.8cm,algebraic}
\begin{pspicture}[showgrid=false](-0.32,-0.06)(1.2,2.2)
\psaxes[Dy=1,Dx=2,ticks=none,arrowsize=0.2]{->}(0,0)(0,0)(1.15,1.6)
\psline[linestyle=dotted](0,1)(1.15,1)
\psGauss[linestyle=dashed,linecolor=blue,mue=0.1]{0}{1}
\pscurve(0,0)(0.4,1)(0.8,1.4)(1,0.9)
\pscurve(0,0.3)(0.1,0.2)(1,1)
\pscurve(0,0.6)(0.45,0)(1,0.6)
%
\rput(-0.21,1.53){$u(x),\rho(x_f)$}
\rput(-0.18,1.37){$c_1\left(\delta_{x,x_{s}}\right)$}
\rput(-0.18,1.20){$c_{\,2}(\delta_{x,\bar{x}_f})$}
\rput(1.14,-0.1){$x$}
\rput(0.76,0.86){$c_{\,2}(\delta)$}
\psxTick(0.45){x_s}
\psxTick(0.33){\widehat{x}}
\psxTick(0.8){x^{\star}}
\rput(0.76,0.5){$\rho(x_f)$}
\rput(0.76,1.50){$u(x)$}
\rput(0.5,0.11){$c_1(\delta)$}
\psline[linecolor=red,linewidth=2pt](0.45,0)(0.8,0)
\end{pspicture}
\hspace{1cm}
\begin{pspicture}[showgrid=false](0,-0.06)(1.2,2.2)
\psaxes[Dy=1,Dx=2,ticks=none,arrowsize=0.2]{->}(0,0)(0,0)(1.15,1.6)
\psline[linestyle=dotted](0,1)(1.2,1)
\psGauss[mue=0.95,linestyle=dashed,linecolor=blue,linewidth=1pt]{0}{1}
\pscurve(0,0)(0.4,1)(0.8,1.4)(1,0.9)
\pscurve(0,0.9)(0.9,0.2)(1,0.3)
\pscurve(0,0.6)(0.45,0)(1,0.6)
%

\rput(-0.21,1.53){$u(x),\rho(x_f)$}
\rput(-0.18,1.37){$c_1\left(\delta_{x,x_{s}}\right)$}
\rput(-0.18,1.20){$c_{\,2}(\delta_{x,\bar{x}_f})$}

\rput(1.14,-0.1){$x$}
\rput(0.76,0.86){$\rho(x_f)$}
\psxTick(0.45){x_s}
\psxTick(0.9){\widehat{x}}
\psxTick(0.8){x^{\star}}
\rput(0.76,0.42){$c_{\,2}(\delta)$}
\rput(0.76,1.50){$u(x)$}
\rput(0.5,0.11){$c_1(\delta)$}
\psline[linecolor=red,linewidth=2pt](0.45,0)(0.8,0)
\end{pspicture}
\end{center}
\caption{Extreme beliefs may move $\widehat{x}$ out of $[\min\{x_{s},x^{\star}\},\max\{x_{s},x^{\star}\}]$.}
\label{fig:trap}
\end{figure}

{\begin{rem}
As remarked in the Introduction, our approach to describe fundamental choices shares some features with the \textit{rational shortlist method} introduced by \cite{ManziniMariotti2007} to justify choice functions, i.e., experimental datasets that collect the subject's selections from many subsets of alternatives, called \textit{menus}.
In their work, the authors find the sufficient and necessary conditions under which choice functions can be interpreted as the decision of an experimental subject who selects, from each menu, the unique option that survives after the sequential application of asymmetric binary relations.
Later on, \cite{Garcia-SanzAlcantud2015} extend the framework of \cite{ManziniMariotti2007} to choice correspondences, which allow multiple selections from each menu. 
In our model choice is determined by the sequential maximization of the one-many ordering and the agent's comprehensive utility, but it does not have to be unique.
Thus, our approach can be represented as a specification of the pattern proposed by \cite{Garcia-SanzAlcantud2015}. 
To prove this fact, we need some additional notation, and definitions.
Let $X$ be a set of alternatives.
Recall that a binary relation $\succ\subseteq X^2$ is \textit{asymmetric} if $x\succ y$ implies $\neg(y\succ x),$ for any distinct $x,y\in X$.
We denote by $\mathscr{X}$ the family of all nonempty subsets of $X$, which are called \textit{menus}.
Given a menu $A\in\mathscr{X}$, we denote by $\max(A,\succ)$ the set $\{x\in A\,\vert\,y\succ x\,\text{for no}\,y\in A\}.$



\begin{defn}
Given $\mathscr{B}\subseteq \mathscr{X}$, a \textsl{partial choice correspondence} on $X$ is a map $\Gamma^* \colon \mathscr{B}\rightarrow \mathscr{X}$ such that $\Gamma^*(A)\subseteq A$ for any $A\in\mathscr{B}$. 	
\end{defn}
A partial choice correspondence displays the behavior of an experimental subject who selects some items from each menu. 
Partial choice correspondences can be explained by the sequential application of two asymmetric binary relations.
%


\begin{defn}[\citealt{Garcia-SanzAlcantud2015}]
Given  $\mathscr{B}\subseteq \mathscr{X}$, a partial choice correspondence $\Gamma^*\colon \mathscr{B}\to \mathscr{X}$ is \textit{rational by two sequential criteria}  if there is a pair of asymmetric binary relations $(\succ_1,\succ_2)$ such that $\Gamma^*(A)=\max(\max(A,\succ_1),\succ_2)$ for any $A\in \mathscr{B}.$ 
\end{defn}
We now define a map that incorporates the assumptions of our model.


\begin{defn}
Assume that $X=\mathbb{R_+},$ and $\mathscr{B}=\{\mathbb{R_+}\}$.
 Given $u\colon\mathbb{R}_{+}\rightarrow \mathbb{R}$, $x_s\in\mathbb{R}_{+}$, $x_f\in\mathscr{F}$,  $\delta\colon \mathbb{R}_{+}^{2}\rightarrow \mathbb{R}_{+}$,  $c_1\colon\mathbb{R}_+\to\mathbb{R}_{+}$, $c_2\colon\mathbb{R}_{+}\to\mathbb{R}_{+}$, and  $U\colon \mathbb{R}_{+}^3\to\mathbb{R}$, let $\Gamma^{\,\succcurlyeq}\colon\mathscr{B}\to \mathscr{X}$ be defined by $$\Gamma^{\,\succcurlyeq}(\mathbb{R}_+)=\max_{x\, \in\, \max\left(\mathbb{R_+},\succcurlyeq\right)} U\big(u(x),c_1(\delta_{x,x_s}),c_{\,2}(\delta_{x,\bar{x}_f})\big).$$

\end{defn}

As announced, we have:

\begin{prop}\label{PROP:two_sequential_rat}
If  $U$ is upper semicontinuous at $x$, $u$ is strictly quasiconcave, $c_1$ is increasing in $\delta$, and $\delta$ is the euclidean distance on $\mathbb{R}_{+},$ then $\Gamma^{\,\succcurlyeq}$ is a partial choice correspondence, and it is rational by two sequential criteria.  
\end{prop}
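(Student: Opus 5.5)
The plan has two parts. First, show that $\Gamma^{\,\succcurlyeq}$ is a partial choice correspondence. Second, produce two asymmetric relations that reproduce it. The domain is $\mathscr{B}=\{\mathbb{R}_+\}$, so it suffices to check that $\Gamma^{\,\succcurlyeq}(\mathbb{R}_+)$ is a nonempty subset of $\mathbb{R}_+$.

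Containment $\Gamma^{\,\succcurlyeq}(\mathbb{R}_+)\subseteq \max(\mathbb{R}_+,\succcurlyeq)\subseteq\mathbb{R}_+$ holds by definition. Nonemptiness is exactly the conclusion of Proposition~\ref{PROP:Optimal choice}, which applies under the same hypotheses. Concretely:
\begin{itemize}
\item Proposition~\ref{PROP:One-many interval} shows that the consideration set is the compact interval $[\min\{x_s,x^\star\},\max\{x_s,x^\star\}]$ (or the singleton $\{x^\star\}$ when $x_s=x^\star$).
\item The composite map $x\mapsto U\big(u(x),c_1(\delta_{x,x_s}),c_2(\delta_{x,\bar x_f})\big)$ is upper semicontinuous.
\item By the Weierstrass theorem for upper semicontinuous functions on compact sets, the maximum is attained.
\end{itemize}
Hence $\Gamma^{\,\succcurlyeq}(\mathbb{R}_+)\in\mathscr{X}$.

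For the second part, I would take $\succ_1$ to be the strict part $\succ$ of the one-many ordering. The strict part of any preorder is asymmetric: if $x\succ y$, then $y\not\succcurlyeq x$, so $y\succ x$ fails. This gives $\max(\mathbb{R}_+,\succ_1)=\max(\mathbb{R}_+,\succcurlyeq)$ by the very definition of the consideration set.

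For $\succ_2$, I would let $V(x)=U\big(u(x),c_1(\delta_{x,x_s}),c_2(\delta_{x,\bar x_f})\big)$ and define $x\succ_2 y$ iff $V(x)>V(y)$. This relation is asymmetric, since $V(x)>V(y)$ and $V(y)>V(x)$ cannot both hold. Then $\max(A,\succ_2)$ is the set of elements of $A$ that no element of $A$ strictly beats in $V$, which is $\arg\max_{x\in A}V(x)$ whenever a maximum exists. Taking $A=\max(\mathbb{R}_+,\succ_1)$, this yields $\max(\max(\mathbb{R}_+,\succ_1),\succ_2)=\Gamma^{\,\succcurlyeq}(\mathbb{R}_+)$.

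The only delicate point is this last identity. The set of $\succ_2$-undominated elements coincides with the argmax only if the supremum of $V$ over $A$ is attained. Otherwise $\max(A,\succ_2)$ would be empty and could not equal a (nonempty) image of the correspondence. The existence argument of the first part, which relies on compactness of the interval from Proposition~\ref{PROP:One-many interval} and on upper semicontinuity, supplies exactly this, so the whole proof hinges on invoking Propositions~\ref{PROP:One-many interval} and~\ref{PROP:Optimal choice} correctly.
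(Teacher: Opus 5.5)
Your proof is correct and follows the paper's own argument. Nonemptiness comes from Proposition~\ref{PROP:Optimal choice}; then $\succ_1=\succ$ and $\succ_2$ is the strict ranking by the composite utility, both are asymmetric, and $\max(\max(\mathbb{R}_+,\succ_1),\succ_2)$ is the constrained argmax. One small correction: $\max(A,\succ_2)=\arg\max_{x\in A}V(x)$ holds unconditionally, since both sides are empty when the supremum is not attained. So attainment is needed only for $\Gamma^{\,\succcurlyeq}(\mathbb{R}_+)$ to be nonempty, not for that identity.
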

\end{rem}}




\section{Game-theoretic setting}\label{SECT:results2} 

In what follows, we introduce a game-theoretic setting in which {indecisive agents face some fundamental choice}, and behave according to the procedure described in Section~\ref{SECT:model}.
%

\subsection{The general case}\label{SUBSECT:the general case}

We now embed the individual in a society with $n\in \mathbb{N}_{+}$ agents. 
We denote by $x_{i}$ the fundamental choice of the $i$-th agent, whose utility is 


{\small
\begin{equation*}
U_i\left(u_i(x_{i}),c_{1}^{\,i}\left(\delta_{x_{i},g(x_{-i})}\right),c_{2}^{\,i}\left(\delta_{x_{i},\bar{h}_i\left(x^i_{f_{-i}}\right)}\right)\right).
\vspace{-0,1cm}
\end{equation*} }
 The arguments of the $i$-th agent's utility  function $U_i$ are game-theoretic elaborations of those described for the single-agent problem in Section~\ref{SECT:model}.

\begin{itemize}

\item $x_{i}\in \mathbb{R}_{+}$ is the $i$-th agent's fundamental choice performed at $t_1$.
We denote by  $x_{-i}\in \mathbb{R}_+^{n-1}$  the vector $(x_1,\dots,x_{i-1},x_{i+1},\dots,x_n)$ of choices made by the $n-1$ agents distinct from the $i$-th one.

\item $u_i:\mathbb{R}_{+}\to\mathbb{R}$ is a function that reaches a finite maximum at $x^{\star}_{i}\in\mathbb{R}_+$.
It represents $i$-th agent's personal utility.

\item $x^{i}_{f_{-i}}\in \mathscr{F}^{\,n-1}$ is a vector $\left(x^{i}_{f_1},\cdots,x^{i}_{f_{i-1}},x^{i}_{f_{i+1}},\cdots x^{i}_{f_{n}}\right)$ containing $n-1$ random variables with finite support on $\mathbb{R}_+$. 
 Each $x^{i}_{f_j}$ describes  the $i$-th agent's belief about the $j$-th agents's future choice at $t_3$.

\item $g_i\colon \mathbb{R}_{+}^{n-1}\to \mathbb{R}_+$ is a function that assigns to each vector $x_{-i}$ some value $g_i(x_{-i})$.
This map aggregates the choices of the rest of the society in a unique social reference point, that influences the $i$-th agent's decision.
The functional form of $g_i$ is not specified, since, in principle, the economic agent may aggregate the choices of the rest of the society in several ways.
However, an intuitive solution is to assume that $g_i(x_{-i})=\left(\sum_{j\neq i}x_{j}\right)/(n-1)$, the average choice of the $n-1$ players distinct from $i$.   

\item $h_i\colon \mathscr{F}^{\,n-1}\to \mathscr{F}$ is a function that associates to each vector $x^{i}_{f_{-i}}$ some random variable $h_i\left(x^{i}_{f_{-i}}\right)$ with finite support on $\mathbb{R}_+$.
We denote by $\bar{h}_i\left(x^{i}_{f_{-i}}\right)$ the expected value of $h_i\colon \mathscr{F}^{\,n-1}\to \mathscr{F}$.
The map $h_i$ assembles the $i$-th agent's beliefs about the future choice of any other individual of the society.
As for $g_i$, we do not identify $h_i$, but one may provide many meaningful specifications of it.
For instance, we can assume that $h_i$ is a random variable derived from a mixture of the probability distributions of the given beliefs, whose weights are determined by the importance that the $i$-th agent assigns to any other individual.      

\item $\delta\colon \mathbb{R}_+^{2}\to \mathbb{R}$ is a metric.
We denote by $\delta_{x_i,g_i(x_{-i})}$ the distance between any $x_i$ and each $g_{i}(x_{-i})$, and by $\delta_{x_i,\bar{h}_i\left(x^{i}_{f_{-i}}\right)}$ the distance between any $x_i$ and $\bar{h}_i\left(x^{i}_{f_{-i}}\right)$.
The map $\delta$ measures the distance of the $i$-th agent's choice at $t_2$ from the aggregated social choice at $t_2$, or from the expected aggregated social choice at $t_3$.

\item $c^{\,i}_{1}:\mathbb{R}_{+}\to \mathbb{R}_+$ 
 is a non-decreasing function of  $\delta_{x_{i},g(x_{-i})}$  
 such that $c^{\,i}_{1}(0)=0$ holds.
 This map accounts for the $i$-th agent's  social cost function, and measures the burden of the distances between  $x_i$ and and any other $j$-th agent's current fundamental choice.
\item $c^{\,i}_{2}:\mathbb{R}_{+}\to \mathbb{R}_+$ is a nondecreasing function of $\delta_{x_{i},\bar{h}\left(x_{f_{-i}}^i\right)}$,
 such that $c^{\,i}_{2}(0)=0$ holds.
This map describes the burden of the distances between  $x_i$ and and the expected aggregated fundamental choice $\bar{h}\left(x_{f_{-i}}^i\right)$, which will produce effects at $t_3$.

\item $U_i\colon \mathbb{R}_{+}^3\to \mathbb{R}$ is nondecreasing function of $u_i$, and it is nonincreasing with respect to $c_{1}^{\,i}$ and $c_{2}^{\,i}$.
It summarizes the $i$-th agent's preferences at $t_2$.
\end{itemize} 

Let $\succcurlyeq_{i,g(x_{-i})}$ be the binary relation on $\mathbb{R}_{+}$ defined by

\begin{equation*} \label{EQ:one-many Pareto}
 x_{i}\succcurlyeq_{i,g(x_{-i})} x_{i}^{\,\prime}\;\Longleftrightarrow\;u_{i}(x_{i})
 \geqslant	u_{i}\big(x_{i}^{\,\prime}\big)
  \;\; \text{ and } \;\; c_{1}^{\,i}\left(\delta_{x_{i},g(x_{-i})}\right)\leqslant 
  c_{1}^{\,i}\left(\delta_{x_{i}^{\,\prime},g(x_{-i})}\right),
\end{equation*}
for any $x_i,x_{i}^{\,\prime}\in\mathbb{R}_+$.
Let $\max(\mathbb{R}_{+},\succcurlyeq_{i,g(x_{-i})})=\{x\in\mathbb{R}_{+}\,\vert\, y\succ_{i,g(x_{-i})} x\;\text{for no $y\in \mathbb{R_{+}}$}\}$ be the set of maximal alternatives in $\mathbb{R}_{+}$ with respect to $\succcurlyeq_{i,g(x_{-i})}$.
As in Section~\ref{SECT:model}, $\max(\mathbb{R}_{+},\succcurlyeq_{i,g(x_{-i})})$ is the $i$-th agent's consideration set, that contains all the alternatives which bring incomparable combinations of personal utility and distance from the aggregated social choice.
In this strategic declination of the model described in Section~\ref{SECT:model}, the $i$-th agent's set of strategies is $\mathbb{R}_{+}$, and the set of strategy profiles is $\mathbb{R}_{+}^{n}$.
 Since the the functions $u_i,c_{1}^{\,i},c_{2}^{\,i},\delta,g_i,h_i$ and the beliefs $x^{i}_{f_{-i}}$ are exogenous in the problem, the $i$-agent's payoff function is 

$$U_i\left(u_i(x_{i}),c_{1}^{\,i}\left(\delta_{x_{i},g({x}_{-i})}\right),c_{2}^{\,i}\left(\delta_{x_{i},\bar{h}_i\left(x^i_{f_{-i}}\right)}\right)\right)=U_i(x_{i},x_{-i})=U_i\colon \mathbb{R}_{+}^{n}\to \mathbb{R}.$$
In what follows we introduce two different notions of equilibrium.

\begin{defn}\label{DEF:Equilibrium_standard}
 A vector $(x^{\diamond}_{1},\cdots,x^{\diamond}_{n})\in\mathbb{R}^{n}_{+}$ is an \textit{equilibrium} if, for any $i\in\lbrace 1,\cdots,n\rbrace$ and any $x^{\,\prime}_{i}\in\mathbb{R}_{+}$, $U_i(x^{\diamond}_{i},x^{\diamond}_{-i})\geq  
U_i(x^{\,\prime}_{i},x^{\diamond}_{-i})$  holds.

\end{defn}

\begin{defn}\label{DEF:Equilibrium_deferral}
 A vector $(x^{\diamond}_{1},\cdots,x^{\diamond}_{n})\in\mathbb{R}^{n}_{+}$ is an \textit{equilibrium after deferral} if, for any $i\in\lbrace 1,\cdots,n\rbrace$ and any $x^{\,\prime}_{i}\in\max(\mathbb{R}_{+},\succcurlyeq_{i,g(x_{-i})})$, $x^{\diamond}_{i}\in\max(\mathbb{R}_{+},\succcurlyeq_{i,g(x_{-i})})$ and    $U_i(x^{\diamond}_{i},x^{\diamond}_{-i})\geq 
U_i(x^{\,\prime}_{i},x^{\diamond}_{-i})$ hold.

\end{defn}
Definition~\ref{DEF:Equilibrium_standard} displays a standard equilibrium, typical of agents  who decide considering simultaneously personal utility, social distance, and expected future social distance of their fundamental choice.
Instead, Definition~\ref{DEF:Equilibrium_deferral} offers a strategic declination of the decision procedure described in Section~\ref{SECT:model}.
In an equilibrium after deferral, each $i$-th agent selects the alternative $x^{\diamond}_{i}$ that guarantees him the best payoff $U_i$, given the actions $x^{\diamond}_{-i}$ of the other agents, and her belief $x^{i}_{f_{-i}}$ about the aggregated future fundamental choice.
However, her choice must belong to $\max(\succcurlyeq_{i,g(x_{-i})},\mathbb{R}_{+})$.
This constraint encodes \textit{ex-post} the indecisiveness of the agent, who selects an alternative that maximizes his payoff function among those belonging to his consideration set, which is shaped by his initial inability to determine which option optimally compensates for personal aspirations and his need {for conformity}.
This framework allows to reproduce agents' indecisiveness and deferral in  fundamental choices, such as school and career decisions.
Note that an equilibrium after deferral is not necessarily an equilibrium: for some $i$-th agent the optimal choice, given the choice $x_{-i}$ of the rest of the society, may not belong to $\max(\succcurlyeq_{i,g(x_{-i})},\mathbb{R}_+)$.
However, given the constraint of Definition~\ref{DEF:Equilibrium_deferral}, in an equilibrium after deferral he is forced to select a strategy from his consideration set.
Similarly, there are equilibria that are not equilibria after deferral.
Indeed, in an equilibrium choices of agents do not necessarily fall into the consideration set determined by their one-many ordering.

To guarantee the existence of standard equilibria and equilibria after deferral, we impose some additional assumptions.
 First, we require that the fundamental choice of any agent is bounded above by a generic finite value $\dot{x}\in\mathbb{R}_+$.
 The parameter $\dot{x}$ is the highest level that any agent's choice can reach.
 For instance, in the choice of his high education, $\dot{x}$ can be interpreted as agent's decision to get a PhD.   
Similarly, if the fundamental choice is the purchase of a fancy new home, $\dot{x}$ may be a luxury apartment in the best area of the city. 
 We also assume that $U$ is continuous for each agent, and quasiconcave with respect to his fundamental choice.
 As expected, these two assumptions, and the boundedness of the choices of agents, ensure the existence of an equilibrium. 

 \begin{prop}\label{PROP:standard_equilibrium}
Assume that $x_{i}\leq \dot{x}$ for each $i\in\{1,\dots,n\}$, with $\dot{x}>0$.
 If  $U_i$ is continuous, and it is quasiconcave in $x_i$ for any $i\in\{1,\dots,n\}$, then there is an equilibrium.\footnote{
A function $U_i:\mathbb{R}_{+}^{n}\to{\mathbb R}$ is \textsl{continuous} 
if for any sequence $(\mathbold{x}_n)_{n\in\mathbb{N}}$ of vectors in $\mathbb{R}^n_{+}$ such that $\lim_{n\to\infty}(\mathbold{x}_n)=\mathbold{x}_{0}$, the equality $\lim_{n\to\infty}U_i(\mathbold{x}_n)=U_i(\mathbold{x}_{0})$ holds.}

\end{prop}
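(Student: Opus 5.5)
The plan is to apply the Debreu--Glicksberg--Fan existence theorem (equivalently, Kakutani's fixed point theorem applied to the best-reply correspondence). The boundedness assumption gives each agent the strategy set $S_i=[0,\dot{x}]$, which is nonempty, compact and convex in $\mathbb{R}_+$. The product $S=\prod_{i=1}^n S_i=[0,\dot{x}]^n$ is then a nonempty compact convex subset of $\mathbb{R}^n_+$. Since $u_i,c^{\,i}_1,c^{\,i}_2,\delta,g_i,h_i$ and the beliefs $x^i_{f_{-i}}$ are exogenous, each payoff is a function $U_i\colon S\to\mathbb{R}$ of the profile only. This is how the paper already rewrote it before Definition~\ref{DEF:Equilibrium_standard}.

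For each $i$, define the best-reply correspondence
\[
B_i(x_{-i})=\argmax_{x_i\in[0,\dot{x}]}U_i(x_i,x_{-i}),
\]
and set $B(x)=\prod_i B_i(x_{-i})$. I will verify the hypotheses of Kakutani's theorem in three steps.
\begin{enumerate}
\item \emph{Nonempty values.} For fixed $x_{-i}$, the map $x_i\mapsto U_i(x_i,x_{-i})$ is continuous on the compact interval $[0,\dot{x}]$. By Weierstrass it attains a maximum, so $B_i(x_{-i})\neq\emptyset$.
\item \emph{Convex values.} Quasiconcavity of $U_i$ in $x_i$ makes every upper contour set $\{x_i: U_i(x_i,x_{-i})\geq \alpha\}$ convex. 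Taking $\alpha$ equal to the maximum value shows that $B_i(x_{-i})$ is an interval, hence convex.
\item \emph{Closed graph.} This follows from joint continuity of $U_i$ via Berge's maximum theorem, since the constraint correspondence $x_{-i}\mapsto[0,\dot{x}]$ is constant and therefore continuous with compact values. Directly: take $x^k\to x$ and $y^k_i\in B_i(x^k_{-i})$ with $y^k_i\to y_i$. For any $z_i\in[0,\dot{x}]$ we have $U_i(y^k_i,x^k_{-i})\geq U_i(z_i,x^k_{-i})$, and passing to the limit with continuity gives $U_i(y_i,x_{-i})\geq U_i(z_i,x_{-i})$.
\end{enumerate}
Products preserve all three properties, so $B\colon S\rightrightarrows S$ is nonempty, convex- and compact-valued with a closed graph. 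Kakutani's theorem then yields $x^\diamond\in B(x^\diamond)$.

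It remains to check that this fixed point is an equilibrium in the sense of Definition~\ref{DEF:Equilibrium_standard}. By construction $U_i(x^\diamond_i,x^\diamond_{-i})\geq U_i(x'_i,x^\diamond_{-i})$ for all $x'_i\in[0,\dot{x}]$. Under the standing assumption that every feasible choice satisfies $x_i\leq\dot{x}$, this is exactly the required inequality. I read the quantifier ``any $x'_i\in\mathbb{R}_+$'' as ranging over feasible choices, and I will state this interpretation explicitly.

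The main obstacle is this interpretive step rather than anything technical: the fixed-point argument itself is routine. The only care needed is that the continuity assumption means joint continuity on $\mathbb{R}^n_+$, as in the footnote, so that it restricts to the compact box $S$ and the closed-graph argument goes through.
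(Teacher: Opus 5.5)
Your proposal is correct and follows essentially the same route as the paper, which also builds the best-reply correspondence on the box $[0,\dot{x}]^n$ and gets nonempty values from Weierstrass, convex values from quasiconcavity in $x_i$, upper hemicontinuity from Berge's maximum theorem, and the fixed point from Kakutani's theorem. Your direct closed-graph argument and your explicit reading of the quantifier in Definition~\ref{DEF:Equilibrium_standard} as ranging over $[0,\dot{x}]$ make precise what the paper leaves implicit.
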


To guarantee the existence of an equilibrium after deferral, we have to rely on the behavioral conditions listed in Propositions \ref{PROP:One-many interval} and \ref{PROP:Optimal choice} that enable an individual to find an optimal fundamental choice, given the behavior of the rest of the society.

\begin{prop}\label{PROP:general_model}
Assume that $x_{i}\leq \dot{x}$ for each $i\in\{1,\dots,n\}$, with $\dot{x}>0$.
 If $U_i$ is continuous, and it is quasiconcave in $x_i$, $u_i$ is strictly quasiconcave, $c_{1}^{\,i}$ is increasing in $\delta$ for any $i\in\{1,\dots,n\}$, and $\delta$ is the euclidean distance on $\mathbb{R}_{+}$, then there is an equilibrium after deferral.

\end{prop}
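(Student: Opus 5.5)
The plan is to apply Kakutani's fixed point theorem to a constrained best-reply correspondence on the compact convex set $[0,\dot{x}]^n$. For each $i$ and each profile $x_{-i}\in[0,\dot{x}]^{n-1}$, define the constraint correspondence
\[
K_i(x_{-i}) = \max(\mathbb{R}_{+},\succcurlyeq_{i,g(x_{-i})}) \cap [0,\dot{x}].
\]
By Proposition~\ref{PROP:One-many interval}, applied with $x_s=g_i(x_{-i})$ and $x^\star=x_i^\star$, this is the closed interval
\[
\bigl[\min\{g_i(x_{-i}),x_i^\star\},\,\max\{g_i(x_{-i}),x_i^\star\}\bigr]
\]
intersected with $[0,\dot{x}]$. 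It collapses to the single point $x_i^\star$ when the two endpoints coincide.

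For the intersection to be nonempty, I would assume that $x_i^\star\le\dot{x}$ (or simply take $\dot{x}$ large enough). I would also assume that $g_i$ maps $[0,\dot{x}]^{n-1}$ into $[0,\dot{x}]$ and is continuous, which holds for the natural choice of $g_i$ as the average of the other agents' choices. With these assumptions $K_i(x_{-i})$ is a nonempty compact interval.

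The key step, and the one I expect to be the main obstacle, is the continuity of $K_i$ as a correspondence in $x_{-i}$. Its endpoints $\min\{g_i(x_{-i}),x_i^\star\}$ and $\max\{g_i(x_{-i}),x_i^\star\}$ are continuous functions of $x_{-i}$ whenever $g_i$ is continuous. An interval-valued correspondence with continuous endpoints $a(\cdot)\le b(\cdot)$ is both upper and lower hemicontinuous: its graph is closed, and any point in $[a(y),b(y)]$ can be approximated by points in $[a(y_k),b(y_k)]$ via the convex combination with the same weight. This continuity is exactly the place where the model-specific hypotheses enter. Strict quasiconcavity of $u_i$, $c_1^{\,i}$ increasing, and the euclidean $\delta$ are needed so that Proposition~\ref{PROP:One-many interval} yields an interval, rather than a possibly disconnected set of Pareto-maximal points.

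Next, define the constrained best reply
\[
B_i(x_{-i})=\argmax_{x_i\in K_i(x_{-i})} U_i(x_i,x_{-i}).
\]
Since $U_i$ is continuous and $K_i$ is continuous with nonempty compact values, Berge's maximum theorem gives that $B_i$ is nonempty, compact-valued and upper hemicontinuous. Nonemptiness could equally be obtained from Proposition~\ref{PROP:Optimal choice}. Because $U_i$ is quasiconcave in $x_i$ and $K_i(x_{-i})$ is an interval, each $B_i(x_{-i})$ is convex.

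Finally, let $B(x)=\prod_i B_i(x_{-i})$, a map from $[0,\dot{x}]^n$ to its own subsets. It is nonempty, convex-valued, and has a closed graph on a nonempty compact convex subset of $\mathbb{R}^n$. Kakutani's theorem then gives a fixed point $x^\diamond\in B(x^\diamond)$. By construction, each $x_i^\diamond$ lies in $\max(\mathbb{R}_{+},\succcurlyeq_{i,g(x^\diamond_{-i})})$ and maximizes $U_i(\cdot,x^\diamond_{-i})$ over that set (intersected with the feasible bound). This is exactly an equilibrium after deferral in the sense of Definition~\ref{DEF:Equilibrium_deferral}.
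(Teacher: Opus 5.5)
Your proposal is correct and follows essentially the same route as the paper: Kakutani's theorem on $[0,\dot{x}]^n$ applied to the constrained best-reply correspondence, with nonemptiness from Proposition~\ref{PROP:Optimal choice}, convexity from quasiconcavity of $U_i$ in $x_i$, and upper hemicontinuity from Berge's maximum theorem. You are in fact more careful than the paper, which invokes Berge without checking that the constraint correspondence $x_{-i}\mapsto\max(\mathscr{P}_i,\succcurlyeq_{i,g(x_{-i})})$ is continuous; your continuous-endpoints argument supplies this missing step, and the continuity of $g_i$ you add is genuinely needed (with a discontinuous aggregator the constrained best reply can jump and no equilibrium after deferral need exist), whereas $x_i^\star\le\dot{x}$ is not needed for nonemptiness, since the point $g_i(x_{-i})\in[0,\dot{x}]$ always belongs to $K_i(x_{-i})$.
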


 We may now enquire into the consequences of indecisiveness and choice deferral on the aggregated welfare.
Indeed, beliefs of agents about the expected fundamental choice of the rest of the society do not play any role in the formation of his consideration set, but they influence {his} ultimate decision.
As argued in Subsection~\ref{SUBSECTION:Choosing within the optimal interval}, it may happen that {an} agent's expectations about the future social fundamental choice would induce him to select an alternative that has been previously discarded, and does not belong to his consideration set.
In a strategic framework, indecisiveness may cause a social loss, which is measured by the gap between the aggregated welfare determined by a given equilibrium after deferral, and that of some more Pareto-efficient standard equilibrium.

\begin{defn}
Assume that $\mathbold{{x}^{\diamond}}=
(x^{\diamond}_1,\cdots,x^{\diamond}_n)\in\mathbb{R}^{n}_{+}$ is an equilibrium, but not an equilibrium after deferral, and ${\mathbold{x^{\diamond}}}^{\prime}=({x^{\diamond}_1}^{\prime},\cdots,{x^{\diamond}_n}^{\prime})\in\mathbb{R}^{n}_{+}$ is an equilibrium after deferral, but not an equilibrium.
	Assume also that  ${\mathbold{x^{\diamond}}}^{\prime}$ is Pareto dominated by $\mathbold{x^{\diamond}}$, that is, $ U_i(x^{\diamond}_i,x^{\diamond}_{-i})\geq U_i({x^{\diamond}_i}^{\prime},{x^{\diamond}_{-i}}^{\prime})$ holds for any $i\in\{1,\cdots,n\}$, and $ U_i(x^{\diamond}_i,x^{\diamond}_{-i})> U_i({x^{\diamond}_i}^{\prime},{x^{\diamond}_{-i}}^{\prime})$ at least one $i\in\{1,\cdots,n\}$.
	The \textit{deferral loss of ${\mathbold{x^{\diamond}}}^{\prime}$ with respect to $\mathbold{{x}^{\diamond}}$}  is $\mathscr{L}_{{\mathbold{x^{\diamond}}},{\mathbold{x^{\diamond}}}^{\prime}}=\sum_{i=1}^{n} U_i(x^{\diamond}_i,x^{\diamond}_{-i})-U_i({x^{\diamond}_i}^{\prime},{x^{\diamond}_{-i}}^{\prime})$.  
\end{defn}
The deferral loss measures the social cost of deferred {fundamental} choices.
When agents are indecisive, and delay their decision by confining the choice to their consideration set, they can incur into a drop of utility.
Indeed, their beliefs may lead them to prefer an alternative that is not in the consideration set, and it has been already discarded. 
Thus, each agent is forced to select, given the behavior of the rest of the society, the best item between those remaining in his consideration set.
 
If, \textit{ceteris paribus}, we drop the indecisiveness assumption, by assuming that agents promptly maximize their utility function, and do not defer their choice, more Pareto-efficient standard equilibria may arise.
Thus, $\mathscr{L}$ captures the difference between the aggregated welfare of a standard equilibrium and that of an equilibrium after deferral. 
As pointed out in the introduction, the cost of social pressure and tensions has been discussed in the antecedent literature\fn{See the references to 
	\cite{Akerlof1997}, \cite{AkerlofKranton2000}, \cite{BursztynJensen2016}, and \cite{UschevZenou2020}, among others.  } in which it is argued  that when  social concerns affect payoffs of agents, some non Pareto-efficient equilibria configurations occur.
However, these approaches do not take into account {indecisiveness of agents}, and they only investigate the gap of aggregated welfare between equilibria that arise in strategic interactions among decisive agents who maximize their comprehensive utility function.
Instead, the deferral loss, that is computed by comparing a standard equilibrium and an equilibrium after deferral,  quantifies the discrepancy of aggregated welfare that can be generated if we assume that agents are indecisive, due to social tensions determined by fundamental choices, and postpone their choice. 

In Propositions~\ref{PROP:standard_equilibrium} and~\ref{PROP:general_model} we displayed the conditions that allow the existence of standard equilibria and equilibria after deferral, but we do not characterize them.
Thus, in Subsection~\ref{SUBSECT:homogeneous_agents}, we assume that society consists only of two agents, sharing the same personal utility.
First, we show that any equilibrium after deferral is symmetric, and any symmetric equilibrium is an equilibrium after deferral.
Second, upon specifying beliefs, utility and cost functions of agents, we fully describe the collection of standard equilibria and equilibria after deferral, and we measure the deferral loss that arises from the indecisiveness of agents.

\subsection{Two agents}\label{SUBSECT:homogeneous_agents}

 We present a simplified framework, by assuming that society consists of only two representative individuals ($n=2$), so that each agent looks at the other as if he faces the rest of the society. 
Thus, the utility of each $i$-th agent is

\begin{equation*}
\label{EQ:main_1}
U_i\left(u_i(x_{i}),c_{1}^{\,i}\left(\delta_{x_{i},x_{-i}}\right),c_{2}^{\,i}\left(\delta_{x_{i},{\bar{x}}^{i}_{f_{-i}}}\right)\right).
\end{equation*} 
Since there are only two agents, $g_i(x_{-i})=x_{-i}$ and $h_{i}(x^{i}_{f-i})=x^{i}_{f-i}$ for any $i\in\{1,2\}$.
If economic agents defer their {fundamental} choice due to social tensions, and they share the same personal tastes, the equilibria after deferral exhibit the following features.

\begin{prop}\label{PROP:main proposition}
 Assume that $U_i$ is upper semicontinuous at $x_i$, $u_{i}$ is strictly quasiconcave, and $c_{1i}$ is increasing in $\delta$ for any $i\in\{1,2\}$. 
{ Moreover, suppose that $u_{1}(x)=u_{2}(x)$ for any $x\in\mathbb{R}_+$, and let $\delta$ be the euclidean distance on $\mathbb{R}_{+}$.}
If the pair $(x^{\diamond}_{1},x^{\diamond}_{2})$ is an equilibrium after deferral, then $x^{\diamond}_{1}=x^{\diamond}_{2}$.
If the pair $(x^{\diamond}_{1},x^{\diamond}_{2})$ is an equilibrium such that $x^{\diamond}_1=x^{\diamond}_2$, then $(x^{\diamond}_{1},x^{\diamond}_{2})$ is an equilibrium after deferral.
\end{prop}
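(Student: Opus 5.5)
The plan is to reduce both claims to the description of consideration sets given by Proposition~\ref{PROP:One-many interval}, applied to each agent with the other agent's choice playing the role of the social reference point.

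\emph{Setup.} Since $n=2$, we have $g_i(x_{-i})=x_{-i}$. Strict quasiconcavity of the common personal utility $u_1=u_2$ makes its maximizer unique, so both agents share the same bliss point $x^{\star}=x_1^{\star}=x_2^{\star}$. I read Definition~\ref{DEF:Equilibrium_deferral} with the consideration set evaluated at the equilibrium profile, i.e.\ at $x_{-i}=x^{\diamond}_{-i}$. Under the stated hypotheses (strict quasiconcavity of $u_i$, $c_1^{\,i}$ increasing, euclidean $\delta$), Proposition~\ref{PROP:One-many interval} with $x_s=x^{\diamond}_{-i}$ gives
\[
\max(\mathbb{R}_{+},\succcurlyeq_{i,g(x^{\diamond}_{-i})})=[\min\{x^{\diamond}_{-i},x^{\star}\},\max\{x^{\diamond}_{-i},x^{\star}\}],
\]
which degenerates to $\{x^{\star}\}$ when $x^{\diamond}_{-i}=x^{\star}$.

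\emph{First claim.} Let $(x^{\diamond}_1,x^{\diamond}_2)$ be an equilibrium after deferral. The definition forces $x^{\diamond}_1$ to lie in the interval between $x^{\diamond}_2$ and $x^{\star}$, and $x^{\diamond}_2$ to lie in the interval between $x^{\diamond}_1$ and $x^{\star}$. Suppose, for contradiction, that $x^{\diamond}_1<x^{\diamond}_2$ (the other case is symmetric).
\begin{itemize}
\item Since $x^{\diamond}_1$ lies between $x^{\diamond}_2$ and $x^{\star}$ but is strictly below $x^{\diamond}_2$, we must have $x^{\star}\le x^{\diamond}_1$.
\item Then the interval between $x^{\diamond}_1$ and $x^{\star}$ is $[x^{\star},x^{\diamond}_1]$, so $x^{\diamond}_2\le x^{\diamond}_1$.
\end{itemize}
This contradicts $x^{\diamond}_1<x^{\diamond}_2$, hence $x^{\diamond}_1=x^{\diamond}_2$. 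Only the membership condition of Definition~\ref{DEF:Equilibrium_deferral} is used here, not the payoff comparison.

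\emph{Second claim.} Let $(x^{\diamond}_1,x^{\diamond}_2)=(x,x)$ be an equilibrium in the sense of Definition~\ref{DEF:Equilibrium_standard}.
\begin{itemize}
\item Each agent's consideration set is the interval between $x$ and $x^{\star}$, or $\{x^{\star}\}=\{x\}$ if $x=x^{\star}$. In either case it contains $x=x^{\diamond}_i$, so the membership requirement holds.
\item Since $x^{\diamond}_i$ maximizes $U_i(\cdot,x^{\diamond}_{-i})$ over all of $\mathbb{R}_{+}$, it maximizes it a fortiori over the consideration set, which is a subset of $\mathbb{R}_{+}$.
\end{itemize}
So $(x,x)$ is an equilibrium after deferral.

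\emph{Expected obstacle.} The argument is elementary; the only delicate point is the bookkeeping. One must make sure the consideration set in Definition~\ref{DEF:Equilibrium_deferral} is evaluated at $x^{\diamond}_{-i}$, and that the common $u$ indeed yields a common $x^{\star}$, so that the same interval endpoints appear for both agents. I would also check that the degenerate case $x^{\diamond}_{-i}=x^{\star}$ of Proposition~\ref{PROP:One-many interval} fits the same reasoning. Upper semicontinuity of $U_i$ is not actually needed for these two implications; it would matter only for existence.
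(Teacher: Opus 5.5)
Your proof is correct and follows the paper's own argument. Both use Proposition~\ref{PROP:One-many interval} to identify each agent's consideration set with the interval between $x^{\diamond}_{-i}$ and the common bliss point $x^{\star}$, and both get the first claim from the membership condition alone; your two-step deduction just compresses the paper's three-case split on where $x^{\star}$ lies relative to $x^{\diamond}_1<x^{\diamond}_2$. The second claim is handled the same way in both, by noting that a maximizer over $\mathbb{R}_{+}$ lying in the consideration set is a maximizer over it, and you are right that upper semicontinuity plays no role in either implication.
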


In the setting described in Proposition \ref{PROP:main proposition}, individual preferences are aligned with the social ones, even if the two agents may be endowed with different costs, payoffs, and beliefs.
This situation is typical of social groups like political parties, religious associations, and sports clubs, whose affiliates are tied, at least partially, by common interests and aspirations.
Under the conditions that guarantee the existence of an optimal fundamental choice, in equilibria after deferral the two agents must take the same strategy.
Indeed, in an equilibrium after deferral, the first agent faces social pressure, and discards any alternative which is suboptimal under the joint evaluation of two criteria: his personal utility, maximized at $x^{\star}_{1}$, and the distance from the behavior of society, $x_{2}$.
Under the conditions of Proposition~\ref{PROP:main proposition}, he can select only the strategy that belongs to his consideration set, delimited by $x^{\star}_{1}$ and $x_{2}$.
However, the second agent (the rest of the society) acts according to the same procedure, he shares the same personal optimum $x^{\star}_{2}=x^{\star}_{1}=x^{\star}$, and he tends to perform his final choice in his consideration set, delimited by  $x^{\star}$ and  $x_{1}$.
As a consequence, in any stable configuration compatible with our behavioral procedure both agents are forced to select the same strategy.
Otherwise, should one of them select some alternative different from that picked by the opponent, his fundamental choice, or the opponent's one, would not belong anymore to the respective consideration set, hence contradicting Definition~\ref{DEF:Equilibrium_deferral}.

Moreover, any standard equilibrium in which agents adopt the same strategy is also an equilibrium after deferral.
Indeed, in {this case} each agent's choice, given the behavior of the other agent, maximizes his payoff, and, since the opponent's choice is equal, it belongs to his consideration set, {thus} satisfying the conditions of Definition~\ref{DEF:Equilibrium_deferral}.
The previous result provides a necessary condition of equilibria after deferral in a two-agent setting, and shows that standard symmetric equilibria are equilibria after deferral.
Depending on the assumptions about $u_i, c_{1i}, c_{2i}$, and $U_i$, it is possible to provide a full description the set of equilibria, equilibria after deferral, and the associated deferral loss, as showed in the following example.

\begin{exmp*}[\textsc{Conformist agents}]

\cite{Akerlof1997} presents a mathematical formalization of decisions in which social interactions matter.
In a society with two identical agents, each individual maximizes a utility function consisting of two components: (1) the intrinsic utility€™ of {his} choice, and (2) the social distance€ of his decision from the other individuals of the society.
One of the models discussed in Akerlof's work is the \textit{conformist} model.
A conformist agent maximizes
\vspace{-0.02cm}
$$U_i(x_i,x_{-i})=-d\vert x_{-i}-x_i\vert-ax_{i}^2+bx_{i}+k\,,$$
where $x_{-i}$ is the choice of the other agent, and $a$, $b$, $d$, and $k$ are nonnegative constants.
 The parameter $d$ measures the cost the that the $i$-th's agents faces when he does not adhere to the society's prevailing choice.
The equilibrium is each vector $(x^{\diamond}_1,x^{\diamond}_{2})$ such that $x^{\diamond}_1=x^{\diamond}_2=x^{\diamond}$ belong to $\left[(b-d)/2a,(b+d)/2a\right]$.
The author remarks that any equilibrium configuration in which each choice is different from $b/2a$, the individual optimum, is not Pareto efficient, and it brings a social loss caused by the need of conformism. 
Akerlof assumes that agents are immediately decisive, and capable to select the alternative which maximizes the difference between personal utility and social distance.
If we suppose that social pressure causes indecisiveness of agents and a deferral of their fundamental choices, and we keep the assumptions of the conformist model, that is, $u_i(x_i)=-ax_{i}^2+bx_{i}+k$, $c_{1i}\left(\delta_{x_i,x_{-i}}\right)=-d\vert x_{-i}-x_i\vert$, and $c_{2}^{\,i}=0$ (agents do not share any concern about future social distance), we find that the set of all equilibria after deferral is still the collection of vectors $(x^{\diamond}_1,x^{\diamond}_{2})$, such that $x^{\diamond}_1=x^{\diamond}_2=x^{\diamond}\in\left[(b-d)/2a,(b+d)/2a\right]$, and no deferral loss arises.\footnote{The proof is straightforward, and it is available upon request.}
 However, this result does not hold if agents are concerned in different ways about the expected future social cost of their decisions, and rely on extreme beliefs.
%
%
For instance, consider the game in which the payoff functions of the agents are 
$$ U_1(x_1,x_2)= -d\vert x_1-  \bar{x}_{f_{2}} \vert-d^{\,\prime}\vert x_{1}-x_2\vert-ax_{1}^2+bx_{1}+k\,,\;\;\text{and}$$
\vspace{-0.5cm}
$$ U_2(x_1,x_2)= -d\vert  x_2-\bar{x}_{f_{1}}\vert-d^{\,\prime\prime}\vert x_{2}-x_1\vert-ax_{2}^2+bx_{2}+k\,,$$
where  $x_{f_2}=x_{f_{1}}$ are random variables such that $P\left(x_{f_{2}}=10\,b/2a\right)=P\left(x_{f_{1}}=10\,b/2a\right)=1$, and $d^{\,\prime\prime}>d^{\,\prime}>d$.
Assume also that the choice of each agent is bounded above by some $\dot{x}\gg 10\,b/2a$.
First observe that, as in the conformist model, each $i$-th agent has personal utility $u_i=ax_{i}^2+bx_{i}+k$. 
Moreover, in this situation both agents have extreme beliefs about the future choice of the rest of the society, since they both believe that in the future the opponent will select $x=10\,b/2a$, a choice relatively greater than the individual optimum $b/2a$.
Finally, the condition $d^{\,\prime\prime}>d^{\,\prime}>d$ ensures that the second agent cares more about the distance of his choice from the future choice of his opponent than what the first agent does, and both agents are more concerned about the future consequences of their {fundamental} choice {than the current ones}.

This setting may lead to deferral losses.
This happens because in equilibrium after deferral each $i$-th agent, who has previously experienced indecisiveness caused by social distance, is forced to take a choice that lies between his personal optimum $b/2a$ and $x_{-i}$, the choice of the other agent.
However, his personal belief about the future behavior of the other agent, and the weight assigned to such expectation in his personal evaluation would have induced him to select, in a standard equilibrium, an alternative that does not belong to the interval previously determined.
To see this, assume that $a=2,b=d=4,d^{\,\prime}=7,d^{\,\prime\prime}=16$ and $k=5$.
The best response $\mathscr{B}_1(x_2)=\argmax_{x_1} U_1(x_1,x_2)$ of the first agent 
is 

$$
\mathscr{B}_1(x_2) =
\left\{ 
    \begin{array}{ll}
       7/4 &\;\text{if}\;x_{2}\in[0,7/4]\\
       x_2  &\;\text{if}\; x_2\in(7/4,15/4],\\
       15/4 &\;\text{if}\; x_2\in(15/4,\dot{x}].
       
    \end{array} 
\right.
$$
The best response $\mathscr{B}_2(x_1)=\argmax_{x_2} U_2(x_2,x_1)$ of the second agent 
is

$$
\mathscr{B}_2(x_1) =
\left\{ 
    \begin{array}{ll}
       4   &\;\text{if}\; x_1\in[0,4),\\
       x_1 &\;\text{if}\; x_1\in[4,6], \\
        6 & \;\text{if}\; x_1\in(6,\dot{x}].
    \end{array} 
\right.
$$
The unique equilibrium, according to Definition~\ref{DEF:Equilibrium_standard}, is the pair ${\mathbold{x}}^{\diamond}=(x^{\diamond}_1,x^{\diamond}_2)=(15/4,4)$.
However, Proposition~\ref{PROP:main proposition} holds, and it implies that $(15/4,4)$ is not an equilibrium after deferral.
The set of equilibria after deferral is
$$\{{\mathbold{x}^{\diamond}}^{\prime}=({x^{\diamond}_1}^{\prime},{x^{\diamond}_2}^{\prime})\,\vert\,{x^{\diamond}_1}^{\prime}={x^{\diamond}_{2}}^{\prime}={x^{\diamond}}^{\prime}\in[1,15/4]\}.\footnote{All the computations have been checked on \textit{Mathematica}, and are available upon request.} 
$$
The best response of each agent, the equilibrium, and the equilibria after deferral are represented in Figure~\ref{fig:best_responses}.

\begin{figure}[H]
\begin{center}
\psset{yunit=6.cm,xunit=6cm,algebraic}
\begin{pspicture}[showgrid=false](0,-0.06)(1.17,1.17)
\psaxes[Dy=6,Dx=6,ticks=none,arrowsize=0.2]{->}(0,0)(0,0)(1.15,1.15)
\rput(-0.22,1.10){$x_{2},\blue \mathscr{B}_{2}(x_1)$}
\rput(1.10,-0.09){$x_{1},\red \mathscr{B}_{1}(x_2)$}
\psyTick(0.6){$4$}
\psyTick(0.9){$6$}
\psxTick(0.2625){$7/4$}
\psxTick(0.5625){$15/4$}
\psline[linewidth=1pt,linecolor=blue](0.0,0.6)(0.6,0.6)
\psline[linewidth=1pt,linecolor=blue](0.6,0.6)(0.9,0.9)
\psline[linewidth=1pt,linecolor=blue](0.9,0.9)(1.12,0.9)
\psline[linewidth=1pt,linecolor=red](0.2625,0.0)(0.2625,0.2625)
\psline[linewidth=1pt,linecolor=red]{-]}(0.2625,0.2625)(0.5625,0.5625)
\psline[linewidth=1pt,linecolor=red](0.5625,0.5625)(0.5625,1.12)
\psline[linewidth=1pt,linestyle=dashed,linecolor=red]{[-}(0.14,0.15)(0.2625,0.2625)
\pscircle[fillcolor=blue,fillstyle=solid](0.5625,0.6){0.06}
\end{pspicture}
\end{center}
\caption{The best response functions respectively of the first agent (the red line), and of the second agent (the blue line).
The blue dot is the equilibrium, and the red segment delimited by the red brackets is the set of the equilibria after deferral.
}
\label{fig:best_responses}
\end{figure}
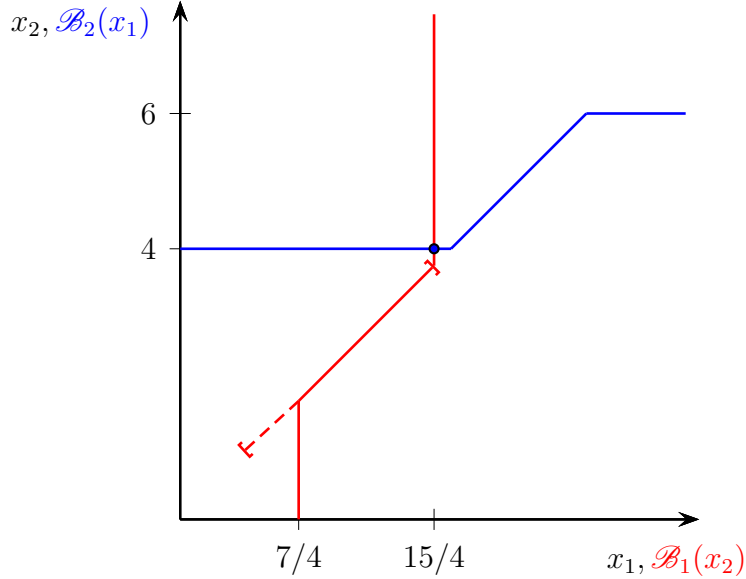

The pair $(15/4,4)$ is not an equilibrium after deferral, because the choice of the second agent does not belong to his consideration set $\max(\mathbb{R}_{+},\succcurlyeq_{2,x^{\diamond}_1})=[x^{\star}_{2},x^{\diamond}_1]=[1,15/4]$.
On the other hand, any pair $({x^{\diamond}_1}^{\prime},{x^{\diamond}_2}^{\prime})$ such that ${x^{\diamond}_1}^{\prime}={x^{\diamond}_2}^{\prime}={x^{\diamond}}^{\prime}\in[1,15/4]$ is an equilibrium after deferral.
To see why, note that the choice ${x^{\diamond}_1}^{\prime}$ of the first agent belongs to his consideration set $\max\left(\mathbb{R}_{+},\succcurlyeq_{1,{x^{\diamond}_2}^{\prime}}\right)=[x^{\star}_{1},{x^{\diamond}_2}^{\prime}]=[1,{x^{\diamond}_2}^{\prime}]$, and it maximizes $U_1$ in $[1,{x^{\diamond}_2}^{\prime}]$.
The choice ${x^{\diamond}_{2}}^{\prime}$ of the second agent belongs to his consideration set $\max\left(\mathbb{R}_{+},\succcurlyeq_{2,{x^{\diamond}_1}^{\prime}}\right)=[x^{\star}_{2},{x^{\diamond}_1}^{\prime}]=[1,{x^{\diamond}_1}^{\prime}]$, and it maximizes $U_2$ in $[1,{x^{\diamond}_1}^{\prime}]$.
However, note that the extreme belief $x_{f_2}$ of the first agent, and the weight $d^{\,\prime}$ assigned to it in his evaluation, moves his optimal response $\mathscr{B}_{1}({x^{\diamond}_2}^{\prime})$ far from $[1,{x^{\diamond}_2}^{\prime}]$.
Thus, if he had not deferred his choice and limited it to his consideration set, he would have selected $\mathscr{B}_{1}({x^{\diamond}_2}^{\prime})=7/4$.
Analogously, if the second agent had not experienced indecisiveness, he would have chosen $\mathscr{B}_{2}({x^{\diamond}_1}^{\prime})=4$.
As a consequence, some equilibria after deferral, when compared with the standard equilibrium point, cause a deferral loss.
For instance, the deferral loss associated with the equilibrium after deferral $(1,1)$, when compared with the pair $(15/4,4)$ is $\mathscr{L}_{{\mathbold{x}^{\diamond}},{\mathbold{x}^{\diamond}}^{\prime}}=\mathscr{L}_{(15/4,4),(1,1)}=U_1(15/4,4)-U_1(1,1)+U_2(15/4,4)-U_2(1,1)=32.125$.
Likewise, the deferral loss associated with the the equilibrium after deferral $(3/2,3/2)$, when compared with $(15/4,4)$ is $\mathscr{L}_{{\mathbold{x}^{\diamond}},{\mathbold{x}^{\diamond}}^{\prime}}=\mathscr{L}_{(15/4,4),(3/2,3/2)}=U_1(15/4,4)-U_1(3/2,3/2)+U_2(15/4,4)-U_2(3/2,3/2)=21.625$.

The deferral loss vanishes as the magnitude of the strategies of the considered equilibrium after deferral increases.
Indeed, when the strategies of an equilibrium after deferral are higher, and closer to the expected future choice $10b/2a=20$ of the rest of the society, the payoff of the first agent is higher than that received in the standard equilibrium configuration.
This happens because the loss that the first agent faces in the standard equilibrium from selecting an alternative $x^{\diamond}_1=15/4$ more distant from the personal optimum $b/2a=1$ and the choice $x^{\diamond}_2=4$ of the second agent exceeds now the gain obtained from selecting an alternative  closer to future social choice.
For instance, the equilibrium after deferral $({x^{\diamond}_1}^{\prime},{x^{\diamond}_1}^{\prime})=(2b/2a,2b/2a)=(2,2)$ does not Pareto dominates the equilibrium $(15/4,4)$ since $U_1(2,2)=-261>U_1(15/4)=-262.875$, and it does not bring any deferral loss, when compared with the standard equilibrium.
\end{exmp*}

\section{Concluding remarks}\label{SECT:Concluding remarks}

We provide a theoretical framework to capture and formalize {non-recurrent} choices with long term life-cycle consequences. 
Experimental and empirical literature in psychology and economics documents how agents constantly struggle with the contradiction between individual desires  and social pressures, leading to indecisiveness.
 At first, the best they can do is to discard dominated alternatives.
Then, they overcome indecision, and choose on the basis of the consideration set, personal preferences, and present and future social distance.
In strategic interaction indecisiveness may lead to social loss.
In fact, extreme (exogenous) beliefs may drive the individual optimum out of the consideration set previously determined, forcing the agent to select an item he would have not chosen if all alternatives were available.

Further research may be devoted to include a study of belief formation and updating mechanisms, and in particular to see how losses that arise from indecisiveness may be ruled out: see, {e.g.}, \cite{EpsteinNoorSandroni2007}, and \cite{AugenblickandRabin2021}.
Moreover, in the  game-theoretic setting that we presented a society consists of only  a finite number of agents.
In real-life situations, reference groups may be so large that the {fundamental} choice of just one individual is negligible and thereby does not affect the others.
Thus, a natural direction of research is to describe the effect of social distance in fundamental choices  using \textsl{large games}, in which players' names are distributed in a atomless probability space, and only a summary of societal actions play a role: see for example \cite{KhanetAl2013}.  

\section{Appendix: Proofs of results}\label{SECT:appendix} 

\noindent {\bf\textit{Proof of Proposition~\ref{PROP:One-many interval}}}. 
Since $u(\cdot)$ is strictly quasiconcave, it follows that $x^{\star}$ is unique and $u$ is increasing in $[0,x^{\star})$, and decreasing in $(x^{\star},+\infty ]$.
Consider the case $x_{s}=x^{\star}$.
Toward a contradiction, suppose there is $x^{\prime} \neq x_{s}$ belonging to $\max(\mathbb{R}^{+},\succcurlyeq)$.
We have $u(x^{\star})=u(x_{s})> u(x^{\prime})$ and $\delta_{x_{s},x_{s}}=0<\delta_{x^{\prime},x_{s}}$, which implies $c_1(\delta_{x_{s},x_{s}})=0 <c_1(\delta_{x^{\prime},x_{s}})$.
Thus $x_{s}\succ x^{\prime}$ and $x^{\prime}\not\in\max([0,+\infty),\succcurlyeq)$.

Next, consider the case $x_{s}<x^{\star}$.
We claim that $\max(\mathbb{R}_{+},\succcurlyeq)=[x_{s},x^{\star}]$.
To show that, we need  to prove that $(a)$ any point out of $[x_{s},x^{\star}]$ does not belong to $\max(\mathbb{R}_{+},\succcurlyeq)$, and $(b)$ any point in $[x_{s},x^{\star}]$ belongs to $\max(\mathbb{R}_{+},\succcurlyeq)$. 

To show that $(a)$ holds, assume by contradiction that there is $x^{\prime}\in\max(\mathbb{R}_{+},\succcurlyeq)$ that is not in $[x_{s},x^{\star}]$. 
Two cases are possible: $x^{\prime}<x_{s}$ or $x^{\prime}>x^{\star}$. 
If $x^{\prime}<x_{s}<x^{\star}$, (strict) monotonicity of $u$ implies $u(x_{s})\geq u(x^{\prime})$, and $\delta_{x_{s},x_{s}}=0<\delta_{x^{\prime},x_{s}}$, which implies $c_1(\delta_{x_{s},x_{s}}) < c_1(\delta_{x^{\prime},x_{s}})$, since $c_1$ is increasing.
Thus $x_{s}\succ x^{\prime}$ and $x^{'}\not\in \max(\mathbb{R}_{+},\succcurlyeq)$.
If $x^{\prime}>x^{\star}>x_{s}$, since $\delta$ is the euclidean distance on $\mathbb{R}_{+}$, we have that $\delta_{x^{\star},x_{s}}<\delta_{x^{\prime},x_{s}}$, which implies $c(\delta_{x^{\star},x_{s}})< c(\delta_{x^{\prime},x_{s}})$, and $u(x^{\star})\geq u(x^{\prime})$, since $x^{\star}$ is $\argmax_{x}u(x)$.
Thus $x^{\star}\succ x^{\prime}$, which implies that $x^{\prime}\not\in \max(\mathbb{R}_{+},\succcurlyeq)$.

To show that $(b)$ holds as well, take a point $x^{\prime}\in[x_{s},x^{\star}]$.
Toward a contradiction, suppose $x^{\prime}$ does not belong to $\max(\mathbb{R}_{+},\succcurlyeq)$.
This implies that there is a point $x^{\prime\prime}\neq x^{\prime}$ such that $x^{\prime\prime}\succ x^{\prime}$.
Without loss of generality, two cases are possible: $x^{\prime\prime}<x^{\prime}$ or $x^{\prime\prime}>x^{\prime}$.
If $x^{\prime\prime}<x^{\prime}$, then $u(x^{\prime})> u(x^{\prime\prime})$, since $u$ is increasing in $[0,x^{\star}]$.
The definition of $\succcurlyeq $ implies that $x^{\prime\prime}\not\succ x^{\prime}$, a contradiction.
If $x^{\prime\prime}>x^{\prime}$, then $\delta_{x^{\prime},x_{s}}<\delta_{x^{\prime\prime},x_{s}}$.
This implies $c_1(\delta_{x^{\prime},x_{s}})<c_1(\delta_{x^{\prime\prime},x_{s}})$.
The definition of $\succcurlyeq$ yields $x^{\prime\prime}\not\succ x^{\prime}$, a contradiction.
The argument is symmetric when $x_{s}>x^{\star}$.
\qed

\medskip
\noindent {\bf\textit{Proof of Proposition~\ref{PROP:Optimal choice}}}.  
	Since $U$ is a function upper semicontinuous in $x$, and due to Proposition \ref{PROP:One-many interval} the set $\max(\mathbb{R}_{+},\succcurlyeq)=[\min\{x_{s},x^{\star}\},\max\{x_{s},x^{\star}\}]$ is a (linearly ordered) closed interval (thus, it is compact), we can apply the  generalized Weierstrass' theorem \citep{MartinezLegaz2014,Quartieri2022} to conclude that $\argmax_{x\in\left(\mathbb{R}_{+},\succcurlyeq\right)}U\big(u(x),c_1(\delta_{x,x_s}),c_2(\delta_{x,\bar{x}_f})\big)$ exists.
\qed

\medskip
\noindent {\bf\textit{Proof of Proposition~\ref{PROP:two_sequential_rat}}}.
	By Proposition~\ref{PROP:Optimal choice} $\max_{x\, \in\, \max\left(\mathbb{R}_{+},\succcurlyeq\right)} U\big(u(x),c_1(\delta_{x,x_s}),c_{\,2}(\delta_{x,\bar{x}_f})\big)$ exists.
	Thus, $\Gamma^{\succcurlyeq}$ is a partial choice correspondence.
	We are left to show that there are two asymmetric relations $(\succ_1,\succ_2)$ on $\mathbb{R_+}$ such that $\Gamma^{\succcurlyeq}(\mathbb{R}_{+})=\max(\max(\mathbb{R_+},\succ_1),\succ_2)$.
	Assume that $\succ_1\equiv \succ$, and let $\succ_2\subset \mathbb{R_+}^2$ be defined by 
	
	$$x\succ_2 y \hspace{0.5cm}\;\text{if}\; U\big(u(x),c_1(\delta_{x,x_s}),c_{\,2}(\delta_{x,\bar{x}_f})\big) > U\big(u(y),c_1(\delta_{y,x_s}),c_{\,2}(\delta_{y,\bar{x}_f})).$$
Note that $\succ_1$ and $\succ_2$ are asymmetric.
Moreover,
\begin{align*}
&\Gamma^{\,\succcurlyeq}(\mathbb{R}_+)=\max_{x\, \in\, \max\left(\mathbb{R}_+,\succcurlyeq\right)} U\big(u(x),c_1(\delta_{x,x_s}),c_{\,2}(\delta_{x,\bar{x}_f})\big)=\max_{x\, \in\, \max\left(\mathbb{R}_+,\succ\right)} U\big(u(x),c_1(\delta_{x,x_s}),c_{\,2}(\delta_{x,\bar{x}_f})\big)=\\	
&\max(\max(\mathbb{R}_+,\succ),\succ_2)=\max(\max(\mathbb{R}_+,\succ_1),\succ_2).\hspace{8cm}
\qed \end{align*}

\medskip
{\noindent {\bf\textit{Proof of Proposition~\ref{PROP:standard_equilibrium}}}.
	Let $\mathscr{P}\subset\mathbb{R}_{+}^{n}$ be the set $\left\{(x_1,\ldots,x_n)\,\big\vert\, x_i\leq \dot{x} \,\text{for any} \,1\leq i\leq n\right\}$.
For each $(x_{1},\cdots,x_{n})\in\mathscr{P}$, define the map $\phi:\mathscr{P}\to2^{\mathscr{P}}$ as follows:
\vspace{-0.3cm}
$$\phi(x_{1},\cdots,x_{n})=\left\{(x^{\diamond}_{1},\cdots,x^{\diamond}_{n})\in\mathscr{P}\,\big\vert\, x^{\diamond}_{i}\in\argmax_{x_{i}}U_i(x_{i},x_{-i})\;\text{for all}\;1\leq i\leq n\right\}.$$
 Since $U_i$ is continuous for any $i\in\{1,\cdots,n\}$, Weierstrass's extreme value theorem yields that $\phi(x_{1},\cdots,x_{n})\neq \varnothing $ for any vector $(x_{1},\cdots,x_{n})\in \mathscr{P}$.
Continuity of $U_i$ also implies that $\phi(x_{1},\cdots,x_{n})$ is closed for any $(x_{1},\cdots,x_{n})\in \mathscr{P}$.
Since for each $i\in\{1,\cdots,n\}$ the function $U_i$ is quasiconcave in $x_{i}$, we obtain that $\phi(x_{1},\cdots,x_{n})$ is a convex set for any $(x_{1},\cdots,x_{n})\in\mathscr{P}$.

We are left to show that $\phi$ is upper hemicontinuous.
Denote by $\mathscr{C}(\mathscr{P})$ the family of all nonempty and closed subsets of $\mathscr{P}$.
Moreover, denote by $\mathbold{x}$ a vector $(x_{1},\dots,x_{n})\in\mathscr{P}$, and by ${\mathbold{x}}^{\diamond}$ a vector belonging to $\phi(\mathbold{x})$.  
 A set valued function $\phi\colon\mathscr{P}\to\mathscr{C}(\mathscr{P})$ is upper hemicontinuous if for any sequences $(\mathbold{x_m})_{m\in\mathbb{N}}$ and $({\mathbold{x}}^{\diamond}_m)_{m\in\mathbb{N}}$ such that ${\mathbold{x}}^{\diamond}_m\in\phi(\mathbold{x}_m)$ for any $m\in\mathbb{N}$, $\lim_{m\to\infty}(\mathbold{x}_m)=  \mathbold{x}^{*}$, and $\lim_{m\to\infty}({\mathbold{x}}^{\diamond}_m)= {{\mathbold{x}}^{\diamond}}^{*}$, we have that ${{\mathbold{x}}^{\diamond}}^{*} \in\phi(\mathbold{x}^{*})$ holds. 
%
Since $U_i$ is continuous, $\max_{x_i} U_i(x_i,x_{-i})$ is continuous, and we can apply Berge's maximum theorem to conclude that $\phi$ is upper hemicontinuous.
Since $\mathscr{P}$ is a nonempty, closed, bounded, and convex set, and $\phi$ is a nonempty, closed, and convex valued upper hemicontinuous function we can apply Kakutani's fixed point theorem to conclude that there is an equilibrium.
\qed}

\medskip

{\noindent {\bf\textit{Proof of Proposition~\ref{PROP:general_model}}}.
	Let $\mathscr{P}\subset\mathbb{R}_{+}^{n}$ be the set $\left\{(x_1,\ldots,x_n)\,\big\vert\, x_i\leq \dot{x} \,\text{for any} \,1\leq i\leq n\right\}$.
Moreover, denote by $\mathscr{P}_i$ the set $[0,\dot{x}]$.
For each $(x_{1},\cdots,x_{n})\in\mathscr{P}$, define the map $\phi:\mathscr{P}\to2^{\mathscr{P}}$ as follows:
\vspace{-0.3cm}
$$\phi(x_{1},\cdots,x_{n})=\left\{(x^{\diamond}_{1},\cdots,x^{\diamond}_{n})\in\mathscr{P}\,\big\vert\, x^{\diamond}_{i}\in\argmax_{x_{i}\in\max\left(\mathscr{P}_i,\succcurlyeq_{i,g(x_{-i})}\right)}U_i(x_{i},x_{-i})\;\text{for all}\;1 \leq i\leq n\right\}.$$

\noindent 
Since $U_i$ is continuous, $u_i$ is strictly quasiconcave, and $c_{1}^{\,i}$ is increasing in $\delta$ for any $i\in\{1,\dots,n\}$, and $\delta$ is the euclidean distance on $\mathbb{R}_{+}$, we apply  Proposition~\ref{PROP:Optimal choice} to obtain $\phi(x_{1},\cdots,x_{n})\neq\varnothing$ for any $(x_{1},\cdots,x_{n})\in \mathscr{P}$.
Continuity of $U_i$ implies that $\phi(x_{1},\cdots,x_{n})$ is closed for any $(x_{1},\cdots,x_{n})\in \mathscr{P}$.
Since for each $i\in\{1,\cdots,n\}$ the function $U_i$ is quasiconcave in $x_{i}$, we conclude that $\phi(x_{1},\cdots,x_{n})$ is a convex set for any $(x_{1},\cdots,x_{n})\in\mathscr{P}$.
We are left to show that $\phi$ is upper  
hemicontinuous.
Since $U_i$ is continuous, 
$\max_{x_i\in\max(\mathscr{P}_i,\succcurlyeq)} U_i(x_i,x_{-i})$ 
is continuous.
Thus, the Berge's maximum theorem yields that $\phi$ is upper hemicontinuous.
Since $\mathscr{P}$ is a nonempty, closed, bounded, and convex set, and $\phi$ is a nonempty, closed, and convex valued upper hemicontinuous function we apply Kakutani's fixed point theorem to conclude that there is an equilibrium after deferral.
\qed}

\medskip
\noindent {\bf\textit{Proof of Proposition~\ref{PROP:main proposition}}}.
	To show that in any equilibrium after deferral $x^{\diamond}_{i}=x^{\diamond}_{-i}\,$ holds, first note that, since $u_{1}(x)=u_{2}(x)$ for any $x\in\mathbb{R}_+$, we have that $x^{\star}_{1}=x^{\star}_{2}=x^{\star}$.
Moreover, Proposition~\ref{PROP:One-many interval} yields that $\max(\mathbb{R}_{+},\succcurlyeq _i)=[\min\{x_{-i},x^{\star}\},\max\{x_{-i},x^{\star}\}]$ for any $i\in\{1,2\}$.
Assume now toward a contradiction that $(x^{\diamond}_1,x^{\diamond}_2)$ is an equilibrium after deferral, and $x^{\diamond}_1\neq x^{\diamond}_2$.
Without loss of generality, assume that $x^{\diamond}_1 < x^{\diamond}_2$.
Three cases are possible: (i) $x^{\star} \leq x^{\diamond}_1 $, (ii) $x^{\diamond}_1 < x^{\star} \leq x^{\diamond}_2 $, and (iii) $x^{\diamond}_1 < x^{\diamond}_2 < x^{\star}$.
If (i) holds, the pair $(x^{\diamond}_1,x^{\diamond}_2)$ is not an equilibrium after deferral, because  $x^{\diamond}_2\not\in\max(\mathbb{R}_+,\succcurlyeq_{2,x^{\diamond}_1})=[x^{\star},x^{\diamond}_1]$.
If (ii) holds, $x^{\diamond}_1\not\in\max(\mathbb{R}_+,\succcurlyeq_{1,x^{\diamond}_2})=[x^{\star},x^{\diamond}_2]$, thus $(x^{\diamond}_1,x^{\diamond}_2)$ is not an equilibrium after deferral.
Finally, if (iii) is true, $x^{\diamond}_1\not\in\max(\mathbb{R}_+,\succcurlyeq_{1,x^{\diamond}_2})=[x^{\diamond}_2,x^{\star}]$, thus $(x^{\diamond}_1,x^{\diamond}_2)$ is not an equilibrium after deferral.
We conclude that $x^{\diamond}_i=x^{\diamond}_{-i}$.

Assume now that the pair $(x^{\diamond}_{i},x^{\diamond}_{-i})$ is an equilibrium such that $x^{\diamond}_{i}=x^{\diamond}_{-i}$.
For each $i\in\{1,2\}$ we have that $x^{\diamond}_i\in\max(\mathbb{R}_+,\succcurlyeq_{i,x^{\diamond}_{-i}})=[\min\{x^{\diamond}_{-i},x^{\star}\},\max\{x^{\diamond}_{-i},x^{\star}\}]$.
Moreover, Definition~\ref{DEF:Equilibrium_standard} yields $x^{\diamond}_{i}=\argmax_{x_i}U_{i}(x_i,x^{\diamond}_{-i})$ for each $i\in\{1,2\}$, which implies that $x^{\diamond}_{i}=\argmax_{x_i\in\max\left(\mathbb{R}_+,\succcurlyeq_{i,x^{\diamond}_{-i}}\right)}U_{i}(x_i,x^{\diamond}_{-i})$ for each $i\in\{1,2\}$.
By Definition~\ref{DEF:Equilibrium_deferral}, $(x^{\diamond}_{i},x^{\diamond}_{-i})$ is an equilibrium after deferral.
\qed

\setlength{\bibsep}{5pt}
\setstretch{0.97}

\setlength{\bibsep}{5pt}
\setstretch{0.97}


\end{document}